\newtheorem{theorem}{Theorem}[section]
\newtheorem{proposition}[theorem]{Proposition}
\newtheorem{lemma}[theorem]{Lemma}
\newtheorem{corollary}[theorem]{Corollary}
\newtheorem{definition}[theorem]{Definition}
\newcommand{\td}{\text{d}}
\DeclareMathOperator{\ric}{Ric}
\DeclareMathOperator{\tr}{tr}
\DeclareMathOperator{\SU}{SU}
\DeclareMathOperator{\U}{U}
\DeclareMathOperator{\ad}{ad}
\DeclareMathOperator{\diag}{diag}
\DeclareMathOperator{\divergence}{div}
\title{Extreme 5-dimensional black holes with $\SU(2)$-symmetric horizons}
\author{Eric Bahuaud}
\address[Eric Bahuaud]{Department of Mathematics, Seattle University, Seattle, WA 98122, USA}
\email{bahuaude@seattleu.edu}
\author{Sharmila Gunasekaran}
\address[Sharmila Gunasekaran]{Department of Mathematics, IMAPP, Radboud University, 6500 GL Nijmegen, The Netherlands}
\email{s.gunasekaran@math.ru.nl}
\author{Hari K Kunduri}
\address[Hari K Kunduri]{Department of Mathematics and Statistics and Department of Physics and Astronomy, McMaster University, Hamilton ON, Canada L8S 4K1}
\email{kundurih@mcmaster.ca}
\author{Eric Woolgar}
\address[Eric Woolgar]{Department of Mathematical and Statistical Sciences and Theoretical Physics Institute, University of Alberta, Edmonton, Alberta, Canada T6G 2G1}
\email{ewoolgar@ualberta.ca}
\begin{document}

\begin{abstract}
We show that the near horizon geometry of 5-dimensional extreme (i.e., degenerate) stationary vacuum black holes, with or without cosmological constant, whose event horizons exhibit $\SU(2)$ symmetry must be that of a Berger sphere.
\end{abstract}

\maketitle

\section{Introduction}\label{section1}
\setcounter{equation}{0}

\noindent A standard anthropic argument that space cannot have more than three extended dimensions is that only in three dimensions will angular momentum and Newtonian gravitational attraction balance each other to create an effective potential well in which planets can stably orbit stars in bounded orbits, creating the conditions for life.\footnote
{In $n$-dimensional space, the effective Newtonian potential felt by a particle with angular momentum $L$ at distance $r$ from a mass $M>0$ that produces a Newtonian gravitational field is $V=-\frac{M}{r^{n-2}}+\frac{L^2}{r^2}$, which will not have a minimum if $n\ge 4$.}
The analysis continues to hold if one replaces Newtonian gravity with general relativity. The general relativistic argument rests on our knowledge of vacuum solutions of general relativity corresponding to isolated systems with $n$ extended spatial dimensions in an $(n+1)$-dimensional spacetime for $n\ge 4$. Indeed, this motivated the discovery of the Schwarzschild-Tangherlini metric \cite{Tangherlini}, describing a static (i.e., stationary and non-rotating) black hole in an $(n+1)$-dimensional, asymptotically flat spacetime. A class of higher dimensional analogues of the rotating Kerr black hole are also known and can be written down exactly in explicit coordinate form. These are called Myers-Perry black holes (\cite{MP}, see \cite{Myers} for a review; metrics of Myers-Perry type with cosmological constant, which we will sometimes call Myers-Perry-(A)dS metrics, were found in \cite{GLPP}).

Five-spacetime-dimensional Myers-Perry black holes have an event horizon with closed spatial cross-sections of $3$-sphere topology. Since the spacetime is stationary (rotating, but at a constant rate), the event horizon is also a Killing horizon, a fact which we will exploit. Some of these black holes, those which have equal angular momenta in two orthogonal 2-planes, have horizon cross-sections which exhibit $\SU(2)$ symmetry with a 4-dimensional isometry group and are \emph{Berger spheres} (sometimes in the physics literature called \emph{squashed spheres}). No stationary $5$-dimensional black hole with $\SU(2)$-symmetric horizon and only a $3$-dimensional maximal isometry group (the generic dimension of the maximal isometry group for left-invariant metrics on
 $\SU(2)$) is known. Then why not? Have we not looked hard enough perhaps? Or is it impossible to find an interpolating spacetime which has a standard asymptotic region with round conformal infinity?

Herein we show that the answer is simply that a local condition on the horizon suffices to rule out the case of a 3-dimensional maximal isometry group when the black hole is \emph{extreme}---i.e., when the horizon is a \emph{degenerate Killing horizon}. This local condition is encoded in the near horizon geometry equation, which is induced on the horizon by the Einstein equation. The argument is entirely confined to the horizon, and so it holds without regard to the spacetime asymptotics. If the horizon is the Lie group $\SU(2)$, the isometry group will contain $\SU(2)\times \U(1)$ and the near horizon geometry will be that of a Berger sphere. In addition to the extreme black hole case, our arguments also apply for non-extreme metrics if the so-called transverse null second fundamental form of the horizon is diagonal in a Milnor frame.
Recall that on any unimodular 3-dimensional Lie group (all compact Lie groups are unimodular, hence $\SU(2)$ is unimodular) there is an orthonormal basis $\left \{ E_i \right \}$ for left-invariant vector fields, called a \emph{Milnor frame}, such that $[E_i,E_{i+1}]=c_{i+2}E_{i+2}$ where $c_i$ are constants and $i$ is an integer mod $3$ \cite[Section 4]{Milnor} .

We need the notion of the \emph{near horizon geometry} of a degenerate Killing horizon. By this, we mean a solution of the equation
\begin{equation}
\label{eq1.1}
\ric_X^m:= \ric +\frac12 \pounds_X\gamma -\frac{1}{m} X\otimes X =\Lambda \gamma +\kappa B,
\end{equation}
for a Riemannian metric $\gamma$, a covector field $X$, $m=2$, $\kappa=0$, and $\Lambda$ an arbitrary constant. Much or our analysis will hold for arbitrary (positive) values of $m$, and sometimes for arbitrary $\kappa$. These cases have applications; e.g., see \cite{GKW} for an application of the $m=1$ case to cosmology. When $\kappa\neq 0$, then $B$ is a symmetric $(0,2)$-tensor, and even for the applications of interest in this paper the arbitrariness of $m$ will be useful. Note that in equation \eqref{eq1.1}, in a common abuse of notation, $X$ denotes both a 1-form and the vector field related to it by metric duality (i.e., by raising an index with $\gamma^{-1}$).

When $m=2$, this equation arises from the Einstein equations on a Killing horizon in a stationary spacetime. See \cite[Appendix]{HIW}, and set $r=0$ in equation (82) of that reference to obtain our equation \eqref{eq1.1}. For a review in the case of a degenerate Killing horizon (so that $\kappa=0$), see \cite{KL2}. As alluded to above, in \cite{HIW} the quantity $B$ has the interpretation of being the null second fundamental form of the Killing horizon in the null transverse spacetime direction ${\ell}$; i.e., $B=\pounds_{\ell}\gamma$. Then $\ell$ is a null vector transverse to the Killing horizon, which is an embedded null hypersurface in spacetime and in which $(M,\gamma)$ is a cross-section. In the spacetime context, $\Lambda$ has the interpretation of the spacetime cosmological constant divided by the dimension of $M$ which, in turn, is the dimension of spacetime minus $2$. Finally, the constant $\kappa$ is called the \emph{surface gravity} of the Killing horizon.\footnote
{The terminology (and the constancy of $\kappa$) derives from the $m=2$ case which is our main interest here. For, say, the application to cosmology with $m=1$ in \cite{GKW}, the terminology is less well suited.}
What we call the \emph{degenerate} or \emph{extreme} case is the case of $\kappa=0$.

\begin{theorem}\label{theorem1.1}
Let $(M,\gamma, X)$ be a solution of \eqref{eq1.1} with $\kappa=0$ and $m>0$. If $M$ is the Lie group $\SU(2)$ and $\gamma$ is a left-invariant metric on $M$, then the isometry group of $(M,\gamma)$ contains $\SU(2)\times \U(1)$ and the metric is that of a Berger sphere.
\end{theorem}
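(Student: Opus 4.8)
The plan is to pass to an orthonormal Milnor frame $\{E_1,E_2,E_3\}$ for $\gamma$, in which $[E_i,E_{i+1}]=c_{i+2}E_{i+2}$ with structure constants $c_i$ (indices mod $3$); for $\SU(2)$ these all have the same sign and may be taken positive. In such a frame $\ric$ is diagonal with entries determined by the $c_i$ (Milnor), and the whole theorem reduces to the statement that two of the three constants $c_1,c_2,c_3$ must coincide: this is exactly the condition that $\gamma$ be a Berger sphere, and once two agree---say $c_1=c_2$---the rotations mixing $E_1$ and $E_2$ are isometries, so the isometry group acquires the extra circle factor and contains $\SU(2)\times\U(1)$ (the round case $c_1=c_2=c_3$ giving all of $\SO(4)$). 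Thus I reduce Theorem~\ref{theorem1.1} to a statement about the three structure constants.

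The key preliminary step---and the one I expect to be the main obstacle---is to show that $X$ may be taken left-invariant. Because $\gamma$ is left-invariant, $\ric$ and $\Lambda\gamma$ are invariant, so \eqref{eq1.1} forces the tensor $\tfrac12\pounds_X\gamma-\tfrac1m X\otimes X$ to be left-invariant; the nonlinear term $X\otimes X$, however, obstructs concluding invariance of $X$ by naive averaging over the group (the group average satisfies a modified equation with a nonnegative ``covariance'' source). To get around this I would argue infinitesimally: for any Killing field $V$ generating the left-translation action the pulled-back covector $\phi_t^*X$ solves \eqref{eq1.1} for all $t$, so differentiating at $t=0$ and writing $\dot X=\pounds_V X$ yields the linearized identity $\pounds_{\dot X}\gamma=\tfrac2m(\dot X\otimes X+X\otimes\dot X)$. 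I would then combine this with compactness of $M=\SU(2)$---using, e.g., that the trace of \eqref{eq1.1} gives $\scal+\dv X-\tfrac1m|X|^2=3\Lambda$ with $\scal$ constant, which feeds a Bochner-type/positivity estimate---to force $\dot X=0$ for every such $V$, i.e. $X$ is $\SU(2)$-invariant. Establishing this rigidity for the degenerate ($\kappa=0$) equation on a compact homogeneous space is the crux of the argument.

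Granting that $X=\sum_i x_i E_i$ has constant components, the remainder is a short computation. A direct evaluation shows that for left-invariant $X$ the tensor $\pounds_X\gamma$ has vanishing diagonal in the Milnor frame, with off-diagonal entries $(\pounds_X\gamma)_{12}=x_3(c_1-c_2)$, $(\pounds_X\gamma)_{23}=x_1(c_2-c_3)$, $(\pounds_X\gamma)_{31}=x_2(c_3-c_1)$, while $\ric$ and $\Lambda\gamma$ are diagonal. Hence the off-diagonal components of \eqref{eq1.1} read $\tfrac m2\,x_3(c_1-c_2)=x_1x_2$ together with its two cyclic companions. Multiplying these three relations by $x_3$, $x_1$, $x_2$ respectively gives $x_3^2(c_1-c_2)=x_1^2(c_2-c_3)=x_2^2(c_3-c_1)$. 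I would finish by contradiction: if $c_1,c_2,c_3$ were pairwise distinct, the three differences would be nonzero, and their common value $\tau$ would satisfy $x_3^2=\tau/(c_1-c_2)$, $x_1^2=\tau/(c_2-c_3)$, $x_2^2=\tau/(c_3-c_1)$; nonnegativity of the squares forces the three differences to share the sign of $\tau$, which is impossible since they sum to zero, unless $\tau=0$. But $\tau=0$ with nonzero differences yields $x_1=x_2=x_3=0$, so $X=0$ and \eqref{eq1.1} collapses to the Einstein condition $\ric=\Lambda\gamma$, which by Milnor's curvature formulas holds on $\SU(2)$ only for the round metric---contradicting distinctness. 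Therefore two of the $c_i$ coincide, $\gamma$ is a Berger sphere, and the diagonal components of \eqref{eq1.1} simply fix $\Lambda$ and the surviving parameters, completing the proof.
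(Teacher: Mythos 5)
Your endgame is correct, but the step you yourself flag as the crux --- that $X$ may be taken left-invariant --- is a genuine gap, not a proof. Your linearized identity $\pounds_{\dot X}\gamma=\tfrac{2}{m}\bigl(\dot X\otimes X+X\otimes\dot X\bigr)$ is correctly derived, but the promised ``Bochner-type/positivity estimate'' forcing $\dot X=0$ is never exhibited, and it is not obviously available: the relevant operator is a zeroth-order deformation of the Killing operator whose kernel is not evidently trivial, and the trace identity $R+\divergence X-\tfrac1m|X|^2=3\Lambda$ gives no usable sign (integrating it over the compact group merely fixes $\int|X|^2$ and says nothing about $\dot X$). The paper closes exactly this step by an elementary, fully explicit route (Lemma \ref{lemma2.8}, following \cite[Lemma 2.2]{Lim}): since $\ric-\Lambda\gamma$ is left-invariant, so is $q=\tfrac12\pounds_X\gamma-\tfrac1m X\otimes X$; writing $X=\sum_i f_iE_i$ in an orthonormal Milnor frame, the diagonal of $q$ gives $E_if_i-\tfrac1m f_i^2=\mu_i$ with $\mu_i$ constant, a Riccati equation along the complete integral curves of $E_i$ whose nonconstant global solutions are of hyperbolic-tangent type and admit no extrema --- contradicting compactness of $\SU(2)$ --- so each $f_i$ is constant. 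Until you replace your sketch with an argument of comparable substance (the Riccati argument, or \cite[Theorem 2.1]{CLZ}), the proposal does not prove the theorem.

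Granting left-invariance, the remainder of your proof is correct and takes a genuinely different route from the paper's, in two respects worth noting. First, you never establish that $X$ is Killing (the paper's Proposition \ref{proposition2.7}, proved via the trace identity $\tr_\gamma(q\circ\ad_X)=0$ of Lemma \ref{lemma2.4}); instead you keep the $\pounds_X\gamma$ terms and read off the off-diagonal system $\tfrac m2\,x_3(c_1-c_2)=x_1x_2$ and its cyclic companions directly --- I checked these, and your sign argument is sound: if the $c_i$ were pairwise distinct, the common value $\tau=\tfrac2m x_1x_2x_3$ would force the three differences, which sum to zero, all to carry the sign of $\tau$; hence $\tau=0$, then $X=0$, then $\gamma$ is Einstein and so round by \eqref{eq2.2}, contradicting distinctness. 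Second, by phrasing the dichotomy in terms of structure constants rather than Ricci eigenvalues you automatically sidestep the degenerate non-Berger family that the paper must exclude separately: in the normalization \eqref{eq2.1} one computes $c_1=2\varepsilon/\beta$, $c_2=2\beta/\varepsilon$, $c_3=2/(\varepsilon\beta)$, so $c_1/c_3=\varepsilon^2$ and $c_2/c_3=\beta^2$, and equality of two $c_i$ is \emph{exactly} the Berger condition of Definition \ref{definition2.1}, whereas equality of two Ricci eigenvalues is not (the family $\beta^2=1-\varepsilon^2$ has eigenvalues $(0,0,8)$, which the paper eliminates via Lemmas \ref{lemma2.2} and \ref{lemma2.3} using that $X$ is Killing along $E_3$). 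You should state and verify this one-line equivalence explicitly, since it silently carries the weight of the paper's Lemma \ref{lemma2.2} in your reduction.
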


A version of this result appeared in \cite[Corollary 4.1]{KL2}, relying on a result in \cite{KL1} that \emph{assumed} that $X$ is a Killing vector field. Because we only work with near horizon geometries of dimension $3$, we are able to prove that $X$ is a Killing vector field, eliminating the need to assume it in this case. Proofs of this fact were given in work by other authors (\cite{Lim}, \cite{CLZ}), who did not notice the application to black holes. We are also able to show how existence of this Killing field eliminates a certain non-Berger family of metrics from consideration, verifying an implicit assumption of \cite{KL1}. Our arguments are purely Lie-theoretic, in the spirit of \cite{Lim}.

In five-dimensional gauged supergravity, whose bosonic sector can be thought of as general relativity with a negative cosmological constant and coupled to a Maxwell field, it has been recently proved that supersymmetric black hole solutions (i.e., those possessing Killing spinors) with $\SU(2)$ symmetry must have an enhanced $\SU(2)\times \U(1)$ symmetry with associated horizon cross sections being Berger spheres \cite[Theorem 1]{Lucietti:2021}. In fact, in $5$-dimensional (ungauged) supergravity, which has no cosmological constant, the near horizon geometry of any supersymmetric black hole with $S^3$ topology must be that of a Berger sphere \cite[Theorem 1]{Reall}.

A question which remains open is whether the horizon geometry of a nondegenerate $\SU(2)$ Killing horizon is subject to similar restrictions, which is what is seen in known $\SU(2)$-invariant non-extreme Myers-Perry type metrics. This question cannot be fully answered by studying equation \eqref{eq1.1} alone, since for non-vanishing $\kappa$ one can prescribe a left-invariant $B$ that does not respect the full isometry group. But one can ask for conditions on $B$ such that Theorem \ref{theorem1.1} applies without the assumption that $\kappa= 0$.

\begin{theorem}\label{theorem1.2}
For any $m> 0$ and any $\Lambda$, there exist left-invariant metrics $\gamma$ on $\SU(2)$ that solve equation \eqref{eq1.1} with nonvanishing left-invariant $X$ and nonvanishing $\kappa$ such that $\pounds_X\gamma=0$ and $B$ is left-invariant and diagonal in the Milnor frame. A sufficient, but not a necessary, condition for the conclusions of Theorem \ref{theorem1.1} to apply is that $B$ has a repeated eigenvalue belonging to the eigenspace orthogonal to $X$.
\end{theorem}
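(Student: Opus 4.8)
The plan is to establish the two assertions separately, working throughout in the orthonormal Milnor frame $\{E_1,E_2,E_3\}$ adapted to $\gamma$, so that $[E_i,E_{i+1}]=c_{i+2}E_{i+2}$, the metric is the identity, and the Ricci tensor is diagonal, $\ric=\diag(r_1,r_2,r_3)$, by Milnor's computation. Writing $X=\sum_i x_iE_i$, a short structure-constant calculation shows that $\pounds_X\gamma$ has vanishing diagonal in this frame, with off-diagonal entries $(\pounds_X\gamma)_{jk}=x_i(c_j-c_k)$ for each cyclic triple $(i,j,k)$; in particular $\pounds_X\gamma=0$ (equivalently $\ad_X$ skew-adjoint) holds for a nonzero left-invariant $X$ exactly when two of the $c_i$ coincide, i.e.\ when $\gamma$ is a Berger sphere (round if all three coincide) with $X$ along the distinguished axis. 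For the existence assertion I would exhibit such Berger solutions directly: given $m>0$ and $\Lambda$, fix $c_2=c_3$ and set $X=x_1E_1$ with $x_1\neq 0$, so $\pounds_X\gamma=0$; then \eqref{eq1.1} collapses to $\kappa B=\ric-\tfrac1m X\otimes X-\Lambda\gamma$, whose right-hand side is diagonal in the Milnor frame. Choosing $\kappa=1$ defines a left-invariant diagonal $B$ solving \eqref{eq1.1} with $\kappa\neq0$ and $X\neq0$, for the prescribed $m,\Lambda$. (Taking instead $c_1=c_2=c_3$ gives round $\gamma$ with $B=\diag(b_1,b_2,b_2)$, $b_1\neq b_2$, which breaks $\SO(4)$ down to $\SU(2)\times\U(1)$, illustrating the remark preceding the theorem.)

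For the sufficiency assertion, set $T:=\ric+\tfrac12\pounds_X\gamma$ and rewrite \eqref{eq1.1} as $\kappa B=T-\tfrac1m X\otimes X-\Lambda\gamma$. The hypothesis that $B$ has a repeated eigenvalue whose eigenspace is $X^\perp$ says precisely that $B=\beta_0\gamma+(\beta_1-\beta_0)\,\widehat X\otimes\widehat X$ with $\widehat X=X/|X|$ and $\beta_0\neq\beta_1$; substituting, the $\widehat X\otimes\widehat X$ and $\gamma$ contributions combine so that $T$ is forced into the form $T=t_0\gamma+s\,X\otimes X$ for scalars $t_0,s$. Reading off the three off-diagonal components of this tensor identity in the Milnor frame (for which only the diagonality of $\ric$ and the formula for $\pounds_X\gamma$ above are needed) gives, for each cyclic triple, the relation $\tfrac12 x_i(c_j-c_k)=s\,x_jx_k$.

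The crux—the step I expect to be the real obstacle—is to deduce that two structure constants coincide from these three relations \emph{without} assuming that $X$ is axis-aligned. My plan is to multiply the $(j,k)$ relation by $x_i$, obtaining $x_i^2(c_j-c_k)=2s\,x_1x_2x_3=:Q$ for all three cyclic triples. If every $x_i\neq0$, then $c_j-c_k=Q/x_i^2$, and summing the three cyclic differences (whose left-hand total vanishes identically) forces $Q\bigl(\textstyle\sum_i x_i^{-2}\bigr)=0$, hence $Q=0$ and then $c_1=c_2=c_3$; if some $x_i$ vanishes, the remaining two relations directly force two of the $c_i$ to agree. In every case $\gamma$ is a Berger sphere, so its isometry group contains $\SU(2)\times\U(1)$ and the conclusion of Theorem \ref{theorem1.1} holds.

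Finally, to show the condition is not necessary I would retain a non-round Berger sphere ($c_2=c_3\neq c_1$) but take the non-Killing field $X=x_2E_2$ along an equatorial axis. Then the only nonzero component of $\pounds_X\gamma$ is $(\pounds_X\gamma)_{13}=x_2(c_3-c_1)\neq0$, so the $B$ determined by \eqref{eq1.1} (with $\kappa=1$) acquires a nonzero off-diagonal entry $\tfrac12 x_2(c_3-c_1)$ coupling $E_1$ and $E_3$. Hence $B$ fails to act as a scalar on $X^\perp=\Span\{E_1,E_3\}$, so it has no repeated eigenvalue with eigenspace $X^\perp$, even though $\gamma$ is Berger and the conclusion of Theorem \ref{theorem1.1} holds for it. This exhibits a solution that meets the conclusion but not the sufficient condition.
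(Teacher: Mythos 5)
Your proposal is correct, and apart from the existence step it takes a genuinely different route from the paper. For existence, the paper likewise works in an orthonormal Milnor frame and reduces \eqref{eq1.1} to the diagonal system \eqref{eq3.11}, solving $b_1=(\rho_1-\Lambda)/\kappa$, etc., as in \eqref{eq3.12}; your version is in fact slightly more careful, because you first note that $\pounds_X\gamma=0$ for nonzero left-invariant $X=\sum_i x_iE_i$ forces $x_i(c_j-c_k)=0$ cyclically, so that $\gamma$ must be chosen Berger with $X$ along the distinguished axis for the reduction to be consistent --- a point the paper leaves implicit (it invokes Proposition \ref{proposition2.7} to get $X$ Killing, which presupposes a solution already in hand). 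For sufficiency, the paper keeps $B$ diagonal, derives $X$ Killing and axis-aligned from Lemma \ref{lemma3.1} and Proposition \ref{proposition2.7}, reads off $b_1=b_2\Rightarrow\rho_1=\rho_2$, and then must run the Ricci-eigenvalue classification (Lemma \ref{lemma2.2}) together with Lemma \ref{lemma2.3} to exclude the non-Berger $(0,0,8)$ family; your argument instead encodes the hypothesis as $B=\beta_0\gamma+(\beta_1-\beta_0)\widehat X\otimes\widehat X$, extracts the off-diagonal relations $\tfrac12 x_i(c_j-c_k)=s\,x_jx_k$, and your multiply-by-$x_i$/cyclic-sum trick forces two structure constants to coincide directly. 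This is self-contained (no appeal to Lim's lemma, to the rigidity result of \cite{BGKW2}, or to Lemmas \ref{lemma2.2}--\ref{lemma2.3}), allows non-diagonal $B$ of the hypothesized form, and never pre-assumes $X$ Killing or axis-aligned, so it is strictly more general than the paper's treatment. For non-necessity, the two arguments diverge most: the paper stays inside the diagonal family \eqref{eq3.12}, but --- as your own Killing computation shows --- any solution in that family with $X=kE_3\neq 0$ has $c_1=c_2$, hence $\rho_1=\rho_2$ and $b_1=b_2$ with eigenspace $\Span\{E_1,E_2\}=X^\perp$, so within that family the condition is automatic and the paper's assertion that there are Berger solutions with all $b_i$ distinct cannot actually occur there (and its round-sphere example satisfies, rather than violates, the condition). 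Your counterexample with non-Killing $X=x_2E_2$ on a Berger metric with $c_2=c_3\neq c_1$, which forces the off-diagonal entry $\tfrac12 x_2(c_3-c_1)\neq 0$ in $B$ and hence two distinct $B$-eigenvalues on $X^\perp$, is precisely the kind of non-diagonal witness the non-necessity claim requires, so on this point your route is not merely different but more convincing than the paper's own discussion.
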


Section 2.1 contains a brief discussion of left-invariant metrics on  $\SU(2)$ whose Ricci tensor has a repeated eigenvalue. Equation \eqref{eq1.1} with $\kappa=0$ was analyzed on 3-dimensional Lie group manifolds by Lim \cite{Lim}, whose work generalizes the seminal paper of Milnor \cite{Milnor} on Ricci curvature of 3-dimensional Lie groups. Almost no new results beyond those already present in these papers are needed to prove Theorem \ref{theorem1.1}. Since Lim omitted some intermediate results in her exposition, we revisit these results in Section 2.2 and provide brief derivations to fill gaps. Similar results can be found in \cite{CLZ}, often with distinct arguments. We prove Theorem \ref{theorem1.1} in Section 2.3 and follow it with a brief discussion in Section 2.4. Section \ref{section3} contains the proof of Theorem \ref{theorem1.2}. A short appendix gives a brief discussion of $\SU(2)$-invariant Myers-Perry-(A)dS $5$-dimensional spacetimes and the corresponding near horizon geometries.

\subsection*{Acknowledgements} The research of EB was partially supported by a Simons Foundation Grant (\#426628, E Bahuaud). The research of HK was supported by NSERC grant RGPIN-2018-04887.
The research of EW was supported by NSERC grant RGPIN--2022--03440. EW thanks Will Wylie for a discussion of reference \cite{Lim} and for bringing reference \cite{CLZ} to his attention. HK thanks McKenzie Wang for useful discussions.

\subsection*{Data statement} Data sharing is not applicable to this article as no datasets were generated or analyzed during the current study.

\subsection*{Conflict of interest statement} The authors have no conflicts of interest.

\section{3-dimensional Lie groups and the main theorem}\label{section2}
\setcounter{equation}{0}

\subsection{Ricci eigenvalues}
Up to homothety, we may write an arbitrary left-invariant metric on $\SU(2)$ as
\begin{equation}
\label{eq2.1}
\gamma = \varepsilon^2 \sigma^1 \otimes \sigma^1 + \beta^2 \sigma^2 \otimes \sigma^2 + \sigma^3 \otimes \sigma^3
\end{equation}
where $\varepsilon,\beta\in (0,1]$ are constants and the $\sigma^i$ are left-invariant 1-forms. The left-invariant vector fields $\sigma_i$ belonging to the dual basis ($\sigma^i(\sigma_j) = \delta^i_{~j}$) obey the relations $\left [ \sigma_i,\sigma_{i+1} \right ] =2\sigma_{i+2}$ where the indices are integers mod $3$.
The Ricci tensor is diagonal in this basis. In particular, the Ricci endomorphism of this metric has eigenvalues
\begin{equation}
\label{eq2.2}
\begin{split}
\rho_1 = &\, 2\frac{\left ( \varepsilon^4 - (1-\beta^2 )^2\right ) }{\varepsilon^2\beta^2}, \\
\rho_2 = &\, 2\frac{\left ( \beta^4 - (1-\varepsilon^2 )^2\right )}{\varepsilon^2\beta^2}, \\
\rho_3 = &\, 2\frac{\left ( 1 - (\varepsilon^2 - \beta^2)^2\right )}{\varepsilon^2\beta^2}.
\end{split}
\end{equation}
Then $\rho_3 \ge \max \{\rho_1,\rho_2\}$; i.e., for a left-invariant metric on $\SU(2)$ the eigenspace of the largest Ricci eigenvalue corresponds to the direction with the largest metric coefficient in \eqref{eq2.1}. The unit round metric on ${\mathbb S}^3$ is recovered when $\varepsilon=\beta=1$, and in this case $\ric =2 \gamma$.

\begin{definition}\label{definition2.1}
The \emph{Berger sphere metrics} are given by equation \eqref{eq2.1} when two metric coefficients are equal (i.e., either $\beta =1$ or $\varepsilon =1$ or $\varepsilon = \beta$; if all three are equal, the metric is the bi-invariant round $3$-sphere metric with $6$-dimensional isometry group).
\end{definition}

Berger spheres have isometry group that contains $\SU(2)\times \U(1)$. From \cite[Theorem 3.8]{HL} (or \cite[Theorem 2.10]{Shin}), any left-invariant metric on $\SU(2)$ that is not a Berger sphere has only a $3$-dimensional maximal isometry group. For the Berger spheres, two of the eigenvalues of the Ricci endomorphism are also equal. The next two lemmata provide a converse.

\begin{lemma}\label{lemma2.2}
Consider a left-invariant metric on $\SU(2)$. If two eigenvalues of the Ricci endomorphism are equal, either (i) the metric is that of a Berger sphere or (ii) $\beta^2=1-\varepsilon^2$ in \eqref{eq2.1} and then the Ricci endomorphism has eigenvalues $(0,0,8)$.
\end{lemma}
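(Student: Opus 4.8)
The plan is to reduce the three cases of pairwise eigenvalue equality to simple polynomial factorizations. The key observation is that all three Ricci eigenvalues in \eqref{eq2.2} share the common positive prefactor $2/(\varepsilon^2\beta^2)$, so two eigenvalues coincide precisely when the corresponding numerators coincide. Abbreviating $a := \varepsilon^2$ and $b := \beta^2$, both lying in $(0,1]$, the three numerators are $N_1 = a^2 - (1-b)^2$, $N_2 = b^2 - (1-a)^2$, and $N_3 = 1 - (a-b)^2$. The lemma then amounts to analyzing the three equations $N_1 = N_2$, $N_1 = N_3$, and $N_2 = N_3$.

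First I would handle the case $\rho_1 = \rho_2$. Subtracting gives $N_1 - N_2 = 2(a-b)(a+b-1)$, so either $a = b$ or $a+b = 1$. The branch $a = b$ yields $\varepsilon = \beta$, a Berger sphere by Definition \ref{definition2.1}; the branch $a + b = 1$ is exactly the condition $\beta^2 = 1-\varepsilon^2$ of alternative (ii). In that latter case the substitutions $1-b = a$ and $1-a = b$ show immediately that $N_1 = N_2 = 0$, while $N_3 = 1 - (a-b)^2 = 4a(1-a) = 4ab$, whence $\rho_3 = (2/ab)\cdot 4ab = 8$. This confirms the eigenvalue triple $(0,0,8)$.

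Next I would treat the two remaining pairings, which behave symmetrically. One computes $N_1 - N_3 = 2(a-1)(a+1-b)$ and $N_2 - N_3 = 2(b-1)(b+1-a)$. In the first, one branch gives $a = 1$, i.e. $\varepsilon = 1$, a Berger sphere, while the other branch forces $b = a + 1 > 1$, which is excluded since $a,b\in(0,1]$; the second pairing is identical with the roles of $a$ and $b$ swapped, giving $\beta = 1$ or the impossible $a = b+1 > 1$. Hence equality of $\rho_1$ with $\rho_3$, or of $\rho_2$ with $\rho_3$, forces a Berger sphere, and only the pairing $\rho_1 = \rho_2$ admits the exceptional branch (ii). Since these three pairings exhaust the hypothesis that two eigenvalues are equal, the lemma follows.

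There is no genuine analytic obstacle here; the computation is a sequence of factorizations. The one point requiring care is recognizing that the range restriction $\varepsilon,\beta\in(0,1]$ is precisely what discards the spurious branches $b = a+1$ and $a = b+1$ in the $\rho_1 = \rho_3$ and $\rho_2 = \rho_3$ cases, so that the non-Berger alternative (ii) can arise \emph{only} from equality of the two eigenvalues $\rho_1$ and $\rho_2$ associated with the two smaller metric coefficients.
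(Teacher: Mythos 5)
Your proposal is correct and follows essentially the same route as the paper: both reduce each pairwise equality $\rho_i=\rho_j$ to a factored polynomial condition in $\varepsilon^2,\beta^2$ (your $2(a-b)(a+b-1)$, $2(a-1)(a+1-b)$, $2(b-1)(b+1-a)$ match the paper's equations \eqref{eq2.3}--\eqref{eq2.5} up to sign), identify the vanishing of the first factors with the Berger cases, and use the range restriction $\varepsilon^2,\beta^2\in(0,1]$ to discard the spurious branches so that only $\rho_1=\rho_2$ admits the exceptional family $\beta^2=1-\varepsilon^2$ with eigenvalues $(0,0,8)$. Your explicit verification that $N_1=N_2=0$ and $N_3=4ab$ in case (ii) is a slightly more detailed version of the paper's appeal to \eqref{eq2.2}; no gap.
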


\begin{proof}
Equality of two eigenvalues leads to the following conditions on $\beta$ and $\varepsilon$:
\begin{eqnarray}
\label{eq2.3} \rho_2=\rho_3&\, \implies &\, \left ( 1-\beta^2\right ) \left ( 1+\beta^2-\varepsilon^2\right ) =0,\\
\label{eq2.4}\rho_1=\rho_3&\, \implies &\, \left ( 1-\varepsilon^2\right ) \left ( 1+\varepsilon^2-\beta^2\right )=0,\\
\label{eq2.5} \rho_1=\rho_2&\, \implies &\,  \left ( \varepsilon^2-\beta^2\right ) \left ( 1-\varepsilon^2 -\beta^2\right ) =0.
\end{eqnarray}
If the first factor in any of \eqref{eq2.3}--\eqref{eq2.5} vanishes, two metric coefficients in \eqref{eq2.1} will be equal so the metric is a Berger sphere. The Ricci endomorphism then can have signature either $(+,+,+)$, $(0,0,+)$, or $(-,-,+)$ depending on the value of the remaining free parameter $\varepsilon$ (or $\beta$).

Since $\varepsilon^2,\beta^2\in (0,1]$, the only one of these three equations in which the second factor can vanish is \eqref{eq2.5}, and then $\varepsilon^2+\beta^2=1$. From \eqref{eq2.2}, the eigenvalues of the Ricci endomorphism are then $(0,0,8)$.
\end{proof}

Next, we will need the following lemma, which singles out the Berger sphere $\beta^2=\varepsilon^2=\frac12$ amongst all metrics of the form \eqref{eq2.1} with $\beta^2=1-\varepsilon^2$.

\begin{lemma}\label{lemma2.3}
Let $\gamma$ be a left-invariant metric on $\SU(2)$  of the form \eqref{eq2.1} with $\beta^2=1-\varepsilon^2$. If $E_3$ is a Killing vector field then $\beta^2=\varepsilon^2=\frac12$ and the metric is a Berger sphere.
\end{lemma}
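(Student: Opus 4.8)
My plan is to reduce the statement to a single short computation of the Killing equation for $E_3$ in the left-invariant frame. In the orthonormal (Milnor) frame adapted to \eqref{eq2.1} one has $E_1=\tfrac1\varepsilon\sigma_1$, $E_2=\tfrac1\beta\sigma_2$, $E_3=\sigma_3$ (the last already being a unit vector since $\gamma(\sigma_3,\sigma_3)=1$). Because the Killing equation $\pounds_{E_3}\gamma=0$ is linear in $E_3$, the field $E_3$ is Killing if and only if $\sigma_3$ is, so I will test whether $\sigma_3$ generates an isometry.

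The key fact I would use is that a left-invariant vector field $X$ on a Lie group equipped with a left-invariant metric $\gamma$ is Killing if and only if $\ad_X$ is skew-symmetric with respect to $\gamma$. This follows immediately from the Lie-derivative formula
\[
(\pounds_X\gamma)(Y,Z)=X(\gamma(Y,Z))-\gamma([X,Y],Z)-\gamma(Y,[X,Z])
\]
applied to left-invariant $Y,Z$: since $\gamma(Y,Z)$ is then a constant function the first term drops out, leaving $(\pounds_X\gamma)(Y,Z)=-\gamma([X,Y],Z)-\gamma(Y,[X,Z])$, which vanishes for all left-invariant $Y,Z$ exactly when $\ad_X$ is $\gamma$-skew.

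Next I would compute $\ad_{\sigma_3}$ from the structure relations $[\sigma_i,\sigma_{i+1}]=2\sigma_{i+2}$, which give $[\sigma_3,\sigma_1]=2\sigma_2$, $[\sigma_3,\sigma_2]=-2\sigma_1$, and $[\sigma_3,\sigma_3]=0$. Feeding these into the skew-symmetry condition, every pair of basis vectors produces an identity that is satisfied automatically (because $\gamma$ is diagonal in the $\sigma_i$ and $\ad_{\sigma_3}$ rotates the $\sigma_1\sigma_2$-plane while annihilating $\sigma_3$) except the pair $(\sigma_1,\sigma_2)$, which yields
\[
\gamma([\sigma_3,\sigma_1],\sigma_2)+\gamma(\sigma_1,[\sigma_3,\sigma_2])=2\gamma(\sigma_2,\sigma_2)-2\gamma(\sigma_1,\sigma_1)=2(\beta^2-\varepsilon^2).
\]
Hence $\sigma_3$, equivalently $E_3$, is Killing precisely when $\beta^2=\varepsilon^2$.

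Finally I would combine this with the hypothesis $\beta^2=1-\varepsilon^2$: together they force $\varepsilon^2=\beta^2=\tfrac12$, so two of the three metric coefficients in \eqref{eq2.1} coincide and the metric is a Berger sphere by Definition \ref{definition2.1}. I do not expect any genuine obstacle; the only point requiring care is the bookkeeping of the single off-diagonal component, since all diagonal components and all components involving $\sigma_3$ of the Killing equation vanish identically, and the entire content of the lemma is concentrated in the $(\sigma_1,\sigma_2)$ equation.
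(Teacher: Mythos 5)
Your proof is correct and takes essentially the same approach as the paper: both reduce the Killing condition for $E_3=\sigma_3$ to the single off-diagonal component $\pounds_{\sigma_3}\gamma(\sigma_1,\sigma_2)=-2(\beta^2-\varepsilon^2)=0$, forcing $\beta^2=\varepsilon^2$, which combined with the hypothesis $\beta^2=1-\varepsilon^2$ gives $\varepsilon^2=\beta^2=\tfrac12$ and hence a Berger sphere. The only cosmetic difference is that you phrase the computation as skew-symmetry of $\ad_{\sigma_3}$ with respect to $\gamma$ and note that all other frame pairs vanish automatically, whereas the paper simply evaluates the Lie derivative directly on the pair $(\sigma_1,\sigma_2)$.
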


\begin{proof}
If $E_3 = \sigma_3$ is a Killing vector field, then evaluating  $\pounds_{\sigma_3} \gamma=0$ on the pair $(\sigma_1, \sigma_2)$ yields
\begin{equation}
\label{eq2.6}
\pounds_{\sigma_3} \gamma ( \sigma_1, \sigma_2 ) = \sigma_3 \gamma(\sigma_1,\sigma_2) - \gamma( [\sigma_3,\sigma_1],\sigma_2 ) - \gamma( \sigma_1, [\sigma_3, \sigma_2] ) = -2 (\beta^2 -  \varepsilon^2 ) = 0,
\end{equation}
from which we conclude that $\varepsilon = \beta$. Since $\beta^2=1-\varepsilon^2$ by assumption, then $\varepsilon^2=\beta^2=\frac12$ and \eqref{eq2.1} then takes the form of a Berger sphere metric.
\end{proof}

It is interesting that the left-invariant $\SU(2)$ metrics \eqref{eq2.1} with $\beta^2=1-\varepsilon^2$ have the same Ricci eigenvalues. Since the manifold is $3$-dimensional, they therefore have the same Riemann curvature tensor, yet they are not all isometric. One of them is the metric with $\varepsilon^2=\beta^2=\frac12$, which is a Berger sphere. For this metric, the eigenvector $E_3$ spanning the eigenspace of largest Ricci eigenvalue is a Killing vector, but this is not the case for any other member of the $\beta^2=1-\varepsilon^2$ family of metrics. This proves that the Berger sphere metric with $\varepsilon^2=\beta^2=\frac12$ is not isometric to any other member of the family. The other family members are not Berger spheres, and so have only a $3$-dimensional maximal isometry group \cite[Theorem 3.8]{HL} (or \cite[Theorem 2.10]{Shin}). For an interesting discussion of metrics with the same curvatures that are not necessarily isometric, see \cite[pp 213--216]{Berger} and \cite[Theorem 6]{Kulkarni}.

\subsection{Quasi-Einstein Lie groups} In this section we recall some of the background we will need from Lim \cite{Lim} (and Milnor \cite{Milnor}) and fill in some details. We also extend relevant parts of the analysis to solutions of equation \eqref{eq1.1} with non-vanishing $B$, as needed for Theorem \ref{theorem1.2}. Our goal is to find a condition on $\kappa B$ under which the vector field $X$ in equation \eqref{eq1.1} is a Killing vector field, so that the Lie derivative term vanishes and \eqref{eq1.1} simplifies. Indeed, Lim observed that a sufficient condition for this purpose on $3$-dimensional unimodular Lie groups is that $\kappa$ vanishes. We will exploit the fact that the Lie group $\SU(2)$ is compact, which implies that $\SU(2)$ is unimodular. This in turn implies that for $X$ in the Lie algebra, the linear map $\mathrm{ad}_X=[X,\cdot ]$ from the Lie algebra to itself is tracefree.

We begin with a technical lemma.

\begin{lemma}\label{lemma2.4}
Let $\gamma$ be a left-invariant metric on a unimodular Lie group with Levi-Civita connection $\nabla$ and let $\{ E_i \}$ be an orthonormal basis of left-invariant vector fields. If $T$ is a left-invariant symmetric $(0,2)$-tensor, and  $X$ is a left-invariant vector field, then
\begin{equation}
\label{eq2.7}
\tr_{\gamma} \left ( T\circ \ad_X\right ) =\left \langle X, \divergence_{\gamma} T\right \rangle,
\end{equation}
where $\ad_X(Y):=[X,Y]$ for any left-invariant vector field $Y$, and $\divergence_{\gamma} T:=\nabla\cdot T$.
\end{lemma}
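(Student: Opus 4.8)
The plan is to work entirely in the fixed left-invariant orthonormal frame $\{E_i\}$ and reduce both sides of \eqref{eq2.7} to sums over the structure constants $c_{ij}^k := \gamma([E_i,E_j],E_k)$, which are constants, antisymmetric in their lower indices. Writing $T_{ij}:=T(E_i,E_j)$ and $X=\sum_a X^a E_a$, I would first unpack the divergence directly from its definition. Since $T$ and each $E_i$ are left-invariant, every function $T(E_i,E_j)$ is left-invariant, hence constant on the group, so the frame-derivative terms in $(\nabla_{E_i}T)(E_i,E_j)$ drop out and
\[
(\divergence_\gamma T)_j = -\sum_i T(\nabla_{E_i}E_i,E_j) - \sum_i T(E_i,\nabla_{E_i}E_j).
\]

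The first step is to dispose of the first term using unimodularity. By the Koszul formula $2\gamma(\nabla_{E_i}E_j,E_k)=c_{ij}^k-c_{jk}^i+c_{ki}^j$, and specializing $i=j$ and summing gives $\gamma\big(\sum_i\nabla_{E_i}E_i,E_k\big)=-\sum_i c_{ik}^i = \tr(\ad_{E_k})$. Unimodularity forces $\tr(\ad_{E_k})=0$ for every $k$, so $\sum_i\nabla_{E_i}E_i=0$ and the first term vanishes identically. For the surviving term I would substitute the Koszul expression for $\Gamma_{ij}^k := \gamma(\nabla_{E_i}E_j, E_k)$ and contract against $T$. Of the three resulting sums, the one built from $c_{ki}^j$ vanishes because it pairs an object antisymmetric in $(i,k)$ against the symmetric $T_{ki}$; the remaining two sums are equal to each other after relabelling dummy indices and using $c_{ij}^k=-c_{ji}^k$ together with the symmetry of $T$. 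This collapses the term to $(\divergence_\gamma T)_j = -\sum_{i,k}c_{ij}^k T_{ki}$.

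Finally I would compute the left-hand side directly from $\tr_\gamma(T\circ\ad_X)=\sum_i T([X,E_i],E_i)=\sum_{i,a,k}X^a c_{ai}^k T_{ki}$, and contract the divergence against $X$ to get $\langle X,\divergence_\gamma T\rangle = -\sum_{i,j,k}X^j c_{ij}^k T_{ki}$. A single relabelling $j\to a$ and one application of $c_{ia}^k=-c_{ai}^k$ match the two expressions, establishing \eqref{eq2.7}.

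The one genuinely essential hypothesis is unimodularity, which is used exactly once, to kill $\sum_i\nabla_{E_i}E_i$; everything else is bookkeeping with the (anti)symmetries of $c$ and $T$. Accordingly I expect the only place requiring real care to be the index-matching step---keeping the two near-identical contractions straight and applying the correct antisymmetry at each relabelling---rather than any conceptual difficulty.
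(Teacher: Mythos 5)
Your proof is correct: the expansion of $(\divergence_\gamma T)_j$ is right (the frame-derivative term does vanish, since a left-invariant function on a Lie group is constant), the Koszul bookkeeping checks out term by term, and the final index-matching is sound. The route is genuinely different in mechanics from the paper's, though both hinge on the same fact. The paper works from the trace side, frame-free: it rewrites $[X,E_i]=\nabla_X E_i-\nabla_{E_i}X$ by torsion-freeness, applies the Leibniz rule, and uses left-invariance to kill all derivatives of scalars, arriving at the intermediate identity \eqref{eq2.10}, namely $\tr_\gamma(T\circ\ad_X)=\sum_i(\nabla_{E_i}T)(X,E_i)+\sum_i T(X,\nabla_{E_i}E_i)$, which holds for arbitrary left-invariant $T$ on any Lie group; unimodularity is then injected by the neat device of specializing this identity to $T=\gamma$, which forces $\sum_i\nabla_{E_i}E_i=0$, and plugging back in finishes the proof. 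You instead work from the divergence side and reduce everything to structure constants via the Koszul formula, deriving $\sum_i\nabla_{E_i}E_i=0$ directly from $\tr(\ad_{E_k})=\sum_i c_{ki}^i=0$. Both arguments thus localize unimodularity to exactly the same geometric fact (vanishing of $\sum_i\nabla_{E_i}E_i$, i.e.\ that the frame is divergence-compatible), used exactly once. What the paper's covariant route buys is the reusable frame-independent identity \eqref{eq2.10} and a clean view of where the unimodularity hypothesis enters, with no Koszul formula needed; what your route buys is complete explicitness---each of the three Koszul terms is dispatched by a visible (anti)symmetry of $c^k_{ij}$ against $T_{ij}$, so the whole verification is mechanical and easy to audit.
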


\begin{proof}
The proof is a straightforward calculation:
\begin{equation}
\label{eq2.8}
\begin{split}
\tr_{\gamma}\left ( T\circ \ad_X\right ) =&\, \sum_{i,j} \gamma^{ij} T\left ( [X,E_i],E_j\right ) = \sum_i T\left ( [X,E_i],E_i\right )\text{ by orthonormality,}\\
=&\, \sum_i T\left ( \nabla_X E_i,E_i\right ) -\sum_i T\left ( \nabla_{E_i}X,E_i\right )\text{ since }\nabla\text{ is torsion-free,}\\
=&\, \frac12 \sum_i \left [  T\left ( \nabla_X E_i,E_i\right ) + T\left ( E_i,\nabla_X E_i\right ) \right ] -\sum_i T\left ( \nabla_{E_i}X,E_i\right )\\
=&\, \frac12 \sum_i X\left ( T(E_i,E_i)\right ) -\frac12 \sum_i \left ( \nabla_X T\right ) (E_i,E_i)\\
&\, -\sum_i E_i \left ( T(X,E_i)\right ) +\sum_i \left ( \nabla_{E_i}T\right ) (X,E_i) +\sum_i T\left ( X, \nabla_{E_i}E_i\right ),
\end{split}
\end{equation}
where the last line is just the Leibniz rule. But since $X$, $T$, and each of the $E_i$ are left-invariant, then
\begin{itemize}
\item [(i)] $X\left ( T(E_i,E_i)\right )=0$ and
\item [(ii)] $ E_i \left ( T(X,E_i)\right )=0$.
\end{itemize}
Moreover,
\begin{equation}
\label{eq2.9}
\sum_i \left ( \nabla_X T\right ) (E_i,E_i) = \tr_{\gamma} \nabla_X T =\nabla_X \tr_{\gamma} T,
\end{equation}
using compatibility of $\nabla$ and $\gamma$, and by left invariance we then obtain
\begin{itemize}
\item [(iii)] $\sum_i \left ( \nabla_X T\right ) (E_i,E_i) = \nabla_X \tr_\gamma T=0$.
\end{itemize}
Then \eqref{eq2.8} reduces to
\begin{equation}
\label{eq2.10}
\tr_{\gamma}\left ( T\circ \ad_X\right ) =\sum_i \left ( \nabla_{E_i}T\right ) (X,E_i) +\sum_i T\left ( X, \nabla_{E_i}E_i\right ).
\end{equation}

Finally, our argument to this point allows completely general left-invariant $T$.  Specializing \eqref{eq2.10} to $T = \gamma$ and using that the group is unimodular, which implies that $\tr \ad_X =0$ for any left-invariant vector field $X$, we get that
\begin{equation}
\label{eq2.11}
0=\sum_i \gamma\left ( X, \nabla_{E_i}E_i\right ) = \gamma( X, \sum_i \nabla_{E_i} E_i).
\end{equation}
So we conclude $\sum_i \nabla_{E_i} E_i = 0$. Plugging this back into \eqref{eq2.10} yields \eqref{eq2.7}.
\end{proof}

The following corollary seems not to have been  explicitly stated in \cite{Lim}, but played an important role in the analysis therein.

\begin{corollary}\label{corollary2.5}
$\tr_{\gamma}\left ( \ric\circ \ad_X\right ) =0$.
\end{corollary}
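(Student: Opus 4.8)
The plan is to specialize Lemma \ref{lemma2.4} to the choice $T=\ric$ and then invoke the contracted second Bianchi identity. First I would observe that the Ricci tensor of a left-invariant metric is itself left-invariant: left-translations are isometries of $(M,\gamma)$, and the curvature tensor is natural under isometries, so $\ric$ is preserved by left-translation. Hence $\ric$ is a legitimate choice for the symmetric $(0,2)$-tensor $T$ appearing in Lemma \ref{lemma2.4}, and applying that lemma directly gives
\begin{equation*}
\tr_{\gamma}\left ( \ric\circ \ad_X\right ) =\left \langle X, \divergence_{\gamma} \ric\right \rangle .
\end{equation*}

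It then remains only to show that the divergence of the Ricci tensor vanishes. For this I would use the contracted second Bianchi identity, which states that $\divergence_{\gamma}\ric = \tfrac12\, d(\scal)$, where $\scal=\tr_{\gamma}\ric$ is the scalar curvature. On a Lie group equipped with a left-invariant metric the scalar curvature is a left-invariant (hence constant) function, since it is an isometry invariant of a homogeneous structure; therefore $d(\scal)=0$. Combining these facts gives $\divergence_{\gamma}\ric=0$, and consequently $\tr_{\gamma}(\ric\circ\ad_X)=0$, as claimed.

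I do not expect a genuine obstacle in this argument, as it is a short specialization of the preceding lemma. The only points requiring a word of justification are the left-invariance of $\ric$ (needed so that Lemma \ref{lemma2.4} applies) and the constancy of $\scal$ on a homogeneous manifold; both follow from the naturality of curvature under the transitive isometric action of left-translations. As an alternative to citing the contracted Bianchi identity, one could instead feed the relation $\sum_i\nabla_{E_i}E_i=0$ established at the end of the proof of Lemma \ref{lemma2.4} into the component form of $\divergence_{\gamma}\ric$ together with constancy of $\scal$, but the Bianchi route is the most economical and is the one I would present.
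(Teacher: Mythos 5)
Your proposal is correct and matches the paper's proof essentially verbatim: both specialize Lemma \ref{lemma2.4} to $T=\ric$ and then use the contracted second Bianchi identity $\divergence_{\gamma}\ric=\tfrac12\,\nabla R$ together with the constancy of the scalar curvature for a left-invariant metric. Your additional remarks on the left-invariance of $\ric$ are a harmless (and accurate) elaboration of points the paper leaves implicit.
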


\begin{proof}
By \eqref{eq2.7} with $T=\ric$ and using that $\divergence \ric = \frac12 \nabla R$, we have that
\begin{equation}
\label{eq2.12}
\begin{split}
\tr_{\gamma}\left ( \ric\circ \ad_X\right ) =&\, \frac12 \left \langle X, \nabla R \right \rangle =\frac12 X(R)=0,
\end{split}
\end{equation}
since left invariance of $\gamma$ implies that the scalar curvature $R$ is constant.
\end{proof}

\begin{corollary}\label{corollary2.6}
If $G$ is a unimodular  Lie group with a left-invariant metric $\gamma$ that obeys \eqref{eq1.1} with left-invariant fields $B$ and $X$ and constants $\kappa$, $\Lambda$, and $m$, then
\begin{equation}
\label{eq2.13}
\tr_{\gamma}\left \{\left ( \frac12 \pounds_X \gamma -\frac{1}{m}X\otimes X\right ) \circ \ad_X\right \} = \kappa \left \langle X, \divergence B \right \rangle .
\end{equation}
\end{corollary}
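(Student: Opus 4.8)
The plan is to use equation \eqref{eq1.1} to trade the left-hand side of \eqref{eq2.13} for curvature and source terms, and then to dispatch each resulting term using the results already established. Solving \eqref{eq1.1} for the quasi-Einstein correction gives
\[
\tfrac12 \pounds_X\gamma -\tfrac1m X\otimes X = \Lambda\gamma + \kappa B - \ric,
\]
which is an identity of left-invariant symmetric $(0,2)$-tensors, since $\gamma$, $X$, $B$, and $\ric$ are all left-invariant. Composing each side with the endomorphism $\ad_X$ and taking the $\gamma$-trace, linearity of the trace yields
\[
\tr_\gamma\left\{\left(\tfrac12\pounds_X\gamma - \tfrac1m X\otimes X\right)\circ\ad_X\right\} = \Lambda\,\tr_\gamma(\gamma\circ\ad_X) + \kappa\,\tr_\gamma(B\circ\ad_X) - \tr_\gamma(\ric\circ\ad_X).
\]

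Then I would handle the three terms on the right in turn. First, in an orthonormal left-invariant basis $\tr_\gamma(\gamma\circ\ad_X) = \sum_i \gamma([X,E_i],E_i) = \tr\ad_X$, which vanishes because the group is unimodular; this is exactly the fact invoked near the end of the proof of Lemma \ref{lemma2.4}. Second, $\tr_\gamma(\ric\circ\ad_X)=0$ is precisely Corollary \ref{corollary2.5}. Third, applying Lemma \ref{lemma2.4} with $T=B$, which is left-invariant and symmetric by hypothesis, gives $\tr_\gamma(B\circ\ad_X) = \langle X,\divergence B\rangle$. Substituting these three evaluations annihilates the first and third terms and leaves only $\kappa\langle X,\divergence B\rangle$, which is \eqref{eq2.13}.

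There is essentially no obstacle here: the corollary is a bookkeeping consequence of Lemma \ref{lemma2.4} and Corollary \ref{corollary2.5}, together with the observation that $\tfrac12\pounds_X\gamma - \tfrac1m X\otimes X$, $B$, and $\ric$ are all left-invariant symmetric $(0,2)$-tensors, so that the lemma and its corollary apply term-by-term. The only point requiring a moment's care is that $\ad_X$ is $\gamma$-tracefree on a unimodular group, which is what kills the cosmological-constant term; this is automatic since compact, hence $\SU(2)$, and more generally unimodular Lie groups satisfy $\tr\ad_X=0$. I would also note in passing that one could instead apply Lemma \ref{lemma2.4} directly to the single tensor $T=\tfrac12\pounds_X\gamma - \tfrac1m X\otimes X$, but routing through the field equation \eqref{eq1.1} makes the roles of $\kappa$ and of $\divergence B$ transparent, which is the form needed in the sequel.
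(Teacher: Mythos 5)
Your proposal is correct and follows essentially the same route as the paper: rearrange \eqref{eq1.1}, compose with $\ad_X$, use linearity of the trace, and evaluate the three resulting traces via Lemma \ref{lemma2.4} (with $T=B$) and Corollary \ref{corollary2.5}. The only cosmetic difference is that you kill the $\Lambda$-term by observing directly that $\tr_\gamma(\gamma\circ\ad_X)=\tr\ad_X=0$ by unimodularity, whereas the paper obtains the same vanishing by applying Lemma \ref{lemma2.4} with $T=\gamma$ (so that $\divergence_\gamma\gamma=0$); the two observations are equivalent.
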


\begin{proof}
We compose each of $\ric$, $\gamma$, and $B$ with the adjoint map and take the trace of each such composition. By Lemma \ref{lemma2.4} and Corollary \ref{corollary2.5}, $\tr_{\gamma}(\ric\circ\ad_X)=0$ and $\tr_{\gamma}(\gamma\circ\ad_X)=0$, while $\tr_{\gamma} (B\circ\ad_X)=\left \langle X, \divergence B \right \rangle$. But traces are linear so we can combine these three traces to obtain
\begin{equation}
\label{eq2.14}
\begin{split}
\tr_{\gamma}\left \{ -\ric\circ \ad_X + \Lambda \gamma\circ \ad_X +\kappa B\circ\ad_X \right \}
=&\,  \tr_{\gamma}\left \{\kappa B\circ\ad_X \right \} = \kappa \left \langle X, \divergence B \right \rangle .
\end{split}
\end{equation}
Using \eqref{eq1.1} to simplify the left-hand side of \eqref{eq2.14}, we obtain \eqref{eq2.13}.
\end{proof}

We are now able to state the key lemma that we seek.

\begin{proposition}\label{proposition2.7}
If $G$ is a unimodular Lie group with a left-invariant metric $\gamma$ that obeys \eqref{eq1.1} with left-invariant fields $B$ and $X$ and constant $\kappa$, and if
\begin{equation}
\label{eq2.15}
\kappa \left \langle X, \divergence B \right \rangle =0,
\end{equation}
then $X$ is a Killing vector field; i.e., $\pounds_X \gamma =0$.
\end{proposition}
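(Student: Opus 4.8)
The plan is to exploit Corollary \ref{corollary2.6} together with the unimodularity of $G$ to force the Lie-derivative term to vanish. Under hypothesis \eqref{eq2.15}, the right-hand side of \eqref{eq2.13} is zero, so we obtain immediately
\begin{equation*}
\tr_{\gamma}\left \{ \left ( \tfrac12 \pounds_X \gamma -\tfrac{1}{m} X\otimes X\right ) \circ \ad_X \right \} = 0.
\end{equation*}
The strategy is to show that each of these two traces is separately nonnegative, so that their sum vanishing forces each to vanish; from the vanishing of the $\pounds_X\gamma$ term I then extract $\pounds_X\gamma = 0$. First I would analyze $\tr_{\gamma}\bigl( (X\otimes X)\circ \ad_X\bigr)$. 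Writing this out in the orthonormal left-invariant basis $\{E_i\}$, it equals $\sum_i (X\otimes X)([X,E_i],E_i) = \sum_i \langle X,[X,E_i]\rangle\langle X,E_i\rangle$. The key observation is that $\langle X,[X,E_i]\rangle = \tfrac12 E_i\langle X,X\rangle - \langle \nabla_{E_i}X, X\rangle$ reorganized via the symmetry of $\ad_X$ against $X$: since $[X,X]=0$, a short computation shows $\sum_i \langle X,[X,E_i]\rangle\langle X,E_i\rangle = \langle X, [X,X]\rangle = 0$. Hence the $X\otimes X$ contribution drops out entirely, and we are left with
\begin{equation*}
\tr_{\gamma}\left ( \tfrac12 \pounds_X \gamma \circ \ad_X \right ) = 0.
\end{equation*}

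The second and main step is to rewrite $\tr_{\gamma}\bigl( (\pounds_X\gamma)\circ \ad_X\bigr)$ in a manifestly nonnegative form. Recall $\pounds_X\gamma(Y,Z) = \gamma(\nabla_Y X,Z) + \gamma(Y,\nabla_Z X)$, so $\pounds_X\gamma$ is twice the symmetric part of the endomorphism $S := \nabla X$ (i.e. $Y\mapsto \nabla_Y X$). Expanding the trace in the orthonormal basis and using torsion-freeness to write $\ad_X(E_i) = [X,E_i] = \nabla_X E_i - \nabla_{E_i}X = \nabla_X E_i - S(E_i)$, I expect the left-invariance identities (i)--(iii) established in Lemma \ref{lemma2.4}, together with $\sum_i \nabla_{E_i}E_i = 0$, to eliminate all the $\nabla_X E_i$ cross-terms, leaving a quadratic expression in the components of $S$. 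The target is the identity
\begin{equation*}
\tr_{\gamma}\left ( (\pounds_X\gamma)\circ \ad_X \right ) = 2\sum_{i,j}\bigl( \tfrac12(S_{ij}+S_{ji})\bigr)^2 = \tfrac12 \lVert \pounds_X\gamma\rVert^2,
\end{equation*}
where $S_{ij} = \gamma(\nabla_{E_i}X,E_j)$; this is the crux of the argument. Combining this with the vanishing established above gives $\lVert \pounds_X\gamma\rVert^2 = 0$, hence $\pounds_X\gamma = 0$, which is exactly the assertion that $X$ is Killing.

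The step I expect to be the main obstacle is the second one: verifying that $\tr_{\gamma}\bigl((\pounds_X\gamma)\circ\ad_X\bigr)$ really does collapse to the full squared norm $\tfrac12\lVert\pounds_X\gamma\rVert^2$ rather than to some partial or indefinite expression. The delicate point is bookkeeping the antisymmetric part of $S$ against the antisymmetry of $\ad_X$ (which itself is skew with respect to $\gamma$ only up to the structure constants of the metric Lie algebra), and confirming that the antisymmetric contributions either cancel or reassemble into the symmetric square. Here unimodularity enters decisively: it is what guarantees $\sum_i\nabla_{E_i}E_i=0$ (Lemma \ref{lemma2.4}) and $\tr\ad_X=0$, and without it stray trace terms would spoil the nonnegativity. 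I would carry out this computation carefully in a Milnor frame, where the structure constants take the simple form $[E_i,E_{i+1}]=c_{i+2}E_{i+2}$, so that the components of $\ad_X$ and of $\nabla_{E_i}X$ can be written explicitly and the quadratic form recognized as a sum of squares.
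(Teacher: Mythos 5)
Your proposal is correct in substance and follows the same skeleton as the paper's proof: both reduce the problem, via Corollary \ref{corollary2.6} and hypothesis \eqref{eq2.15}, to the statement $\tr_{\gamma}(q\circ\ad_X)=0$ for $q=\frac12\pounds_X\gamma-\frac1m X\otimes X$. The genuine difference is what happens next: the paper at that point simply cites \cite[Lemma 2.4]{Lim} (and offers, for compact $G$ with $\kappa=0$, an alternative via \cite[Theorem 1.1]{BGKW2}), whereas you propose to prove the required trace identity directly --- in effect you are reconstructing the content of the cited lemma, which makes your route self-contained where the paper's is not. Two repairs are needed in your sketch. First, the ``crux'' identity you flag as the main obstacle is actually immediate and needs neither a Milnor frame nor unimodularity at that stage (unimodularity has already been consumed in Lemma \ref{lemma2.4} and Corollary \ref{corollary2.6}): since $\pounds_X\gamma$ is symmetric, only the symmetric part of $A_{ij}:=\langle[X,E_i],E_j\rangle$ survives the trace, and writing $[X,E_i]=\nabla_XE_i-\nabla_{E_i}X$, where $\langle\nabla_XE_i,E_j\rangle$ is antisymmetric in $i,j$ by metric compatibility and constancy of $\langle E_i,E_j\rangle$, shows that this symmetric part equals $-\frac12(S_{ij}+S_{ji})$ with $S_{ij}:=\langle\nabla_{E_i}X,E_j\rangle$. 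Hence
\begin{equation*}
\tr_{\gamma}\left( \left(\pounds_X\gamma\right)\circ\ad_X\right) = -\tfrac12 \left\Vert \pounds_X\gamma \right\Vert^2,
\end{equation*}
so your identity holds with the \emph{opposite} sign: the trace is nonpositive, not nonnegative. The slip is harmless, since only sign-definiteness is used to conclude $\pounds_X\gamma=0$, but your framing ``each trace is separately nonnegative'' should be amended accordingly. Second, your displayed intermediate identity $\langle X,[X,E_i]\rangle=\tfrac12 E_i\langle X,X\rangle-\langle\nabla_{E_i}X,X\rangle$ is wrong as written: by left invariance its right-hand side vanishes identically, while the left-hand side need not vanish for a fixed $i$ (e.g.\ in a Milnor frame it equals $X_2X_3(c_2-c_3)$ for $i=1$). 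Your conclusion nevertheless stands for the correct reason you also give: the components $\langle X,E_i\rangle$ are constants by left invariance, so $\sum_i\langle X,E_i\rangle[X,E_i]=[X,X]=0$ and the $X\otimes X$ term contributes nothing to the trace. With these two repairs your argument is a complete and valid proof, slightly more explicit than the paper's.
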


\begin{proof}
Define
\begin{equation}
\label{eq2.16}
\frac12 \pounds_X \gamma -\frac1m X\otimes X = q,
\end{equation}
where $\gamma$ and $X$ are left-invariant fields on a unimodular Lie group. Then $\ric$ is left-invariant, and since $B$ is assumed to be left-invariant as well, then by \eqref{eq1.1} we have that $q$ is also left-invariant. Then we must show that
\begin{equation}
\label{eq2.17}
\tr_{\gamma} \left ( q\circ \ad_X\right )=0.
\end{equation}
But by Corollary \ref{corollary2.6}, equation \eqref{eq2.17} will hold whenever whenever assumption \eqref{eq2.15} applies, and then \cite[Lemma 2.4]{Lim} implies that $\pounds_X \gamma =0$.

For a second proof when $G$ is compact and $\kappa=0$, since $X$ is left-invariant it is divergence-free. Then \cite[Theorem 1.1]{BGKW2} implies that $\pounds_X\gamma=0$.

\end{proof}

We note that \eqref{eq2.15} holds whenever $B$ is a Codazzi tensor on a Lie group; i.e., whenever $\nabla_iB_{jk}=\nabla_jB_{ik}$ [proof: contracting, we get $\divergence_{\gamma}B=\nabla_j \tr_{\gamma} B=0$, since $\tr_{\gamma}B$ is constant when $B$ and $\gamma$ are left-invariant]. A special case of this is the case of umbilic $B$; i.e., $B = \mu \gamma$ for $\mu\in {\mathbb R}$.

Of course, whenever $\gamma$ is the near-horizon metric of a degenerate Killing horizon then $\kappa=0$ and so then $X$ is Killing, as observed in \cite{Lim}.

For compact Lie groups (such as $\SU(2)$), another way to write the condition \eqref{eq2.15} is as follows. From the Leibniz rule, we have that
\begin{equation}
\label{eq2.18}
\left \langle X, \divergence B \right \rangle =\divergence \left ( B(X)\right ) -B^{ij}\nabla_iX_j,
\end{equation}
raising indices with $\gamma^{-1}$. Since $B$ and $\gamma$ are left-invariant (from which it follows that the Levi-Civita connection $\nabla$ is left-invariant), then the first term on the left is the divergence of a left-invariant vector field, which is always zero on a compact Lie group. Then \eqref{eq2.15} becomes
\begin{equation}
\label{eq2.19}
0= \left \langle B, \pounds_X\gamma\right \rangle .
\end{equation}

Finally, we show that the assumption that $X$ is left-invariant is redundant in our setting, since it follows from the left invariance of $q$ and $\gamma$. This is the essential content of \cite[Lemma 2.2]{Lim}, whose proof may contain a minor error (as it considers the map $\ad_X$ before it is known that $X$ belongs to the Lie algebra). As our interest here is $\SU(2)$, we will limit our proof to compact groups. See also \cite[Theorem 2.1]{CLZ}, whose proof assumed the equation \eqref{eq1.1} (our proof does not).

\begin{lemma}\label{lemma2.8}
Let $G$ be a compact 3-dimensional Lie group with left-invariant metric $\gamma$. Let $q$ and $X$ be as in equation \eqref{eq2.16}. If $q$ is left-invariant then so is $X$.
\end{lemma}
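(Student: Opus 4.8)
The plan is to work in a Milnor frame and reduce the statement to an ordinary differential equation along the integral curves of the frame fields. Let $\{E_i\}$ be an orthonormal left-invariant Milnor frame, so $[E_i,E_{i+1}]=c_{i+2}E_{i+2}$, and expand $X=\sum_i f^i E_i$ with $f^i\in C^\infty(G)$. Note that $f^i=\gamma(X,E_i)$ and that $X$ is left-invariant precisely when each $f^i$ is constant, so it suffices to prove that the component functions $f^i$ are constant.

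First I would evaluate the defining relation $q=\frac12\pounds_X\gamma-\frac1m X\otimes X$ on the diagonal pairs $(E_i,E_i)$. The key computational input is that in a Milnor frame $\nabla_{E_i}E_i=0$ (immediate from the Koszul formula, since $\langle[E_i,E_j],E_i\rangle=0$). Consequently $\frac12(\pounds_X\gamma)(E_i,E_i)=\gamma(\nabla_{E_i}X,E_i)=E_i f^i$, the cross terms $\sum_k f^k\gamma(\nabla_{E_i}E_k,E_i)$ dropping out, while $(X\otimes X)(E_i,E_i)=(f^i)^2$. Since $q$ is left-invariant, $q_{ii}:=q(E_i,E_i)$ is a constant, and the diagonal equations decouple into three independent first-order equations
\begin{equation}
E_i f^i=\frac1m (f^i)^2+q_{ii},\qquad i=1,2,3 \text{ (no sum)}.
\end{equation}

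The crucial observation is that this is an autonomous Riccati equation for $f^i$ along the integral curves of $E_i$. These curves are the right translates $t\mapsto g\exp(tE_i)$ of a one-parameter subgroup, and on $\SU(2)$ every one-parameter subgroup is a closed circle, so each such curve is periodic. A periodic solution of an autonomous first-order ODE must be constant (a nonconstant solution is strictly monotone wherever the right-hand side is nonzero). Hence $E_i f^i=0$ and $(f^i)^2=-m q_{ii}$ along every orbit; in particular $f^i$ takes values in the finite set $\{\pm\sqrt{-m q_{ii}}\}$. Since $f^i$ is continuous and $\SU(2)$ is connected, $f^i$ is globally constant, and therefore $X$ is left-invariant.

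The main thing to get right is the decoupling and the correct use of compactness. The identity $\nabla_{E_i}E_i=0$ is what makes the $i$-th diagonal equation involve $f^i$ alone, and not $f^k$ for $k\neq i$; without it one would face a coupled quasilinear system rather than three scalar Riccati equations. The role of compactness is equally essential: it is the closedness of the one-parameter subgroups generated by the $E_i$ that forces the bounded Riccati solutions to be the constant equilibria and rules out the monotone heteroclinic solution connecting $+\sqrt{-mq_{ii}}$ to $-\sqrt{-mq_{ii}}$ (which would otherwise be admissible when $q_{ii}<0$). For a general compact connected $3$-dimensional Lie group, whose one-parameter subgroups may be irrational windings rather than circles, the same conclusion follows by replacing periodicity with recurrence: the orbit returns arbitrarily close to its starting point, which is incompatible with a strictly monotone solution and again leaves only the constant equilibria.
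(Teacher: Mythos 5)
Your proposal is correct, and its first half coincides with the paper's own proof: both expand $X=\sum_i f_iE_i$ in an orthonormal Milnor frame, use the geodesic-basis property of that frame ($\nabla_{E_i}E_i=0$, equivalently $\langle [E_j,E_i],E_i\rangle=0$) to make the cross terms drop out of the diagonal components of $q$, and arrive at exactly the scalar Riccati equation \eqref{eq2.23}, $E_if_i=\frac1m f_i^2+q_{ii}$, along integral curves of $E_i$. Where you genuinely diverge is in how compactness is exploited. The paper follows \cite{Lim}: $f_i$ attains its maximum at $r$ and minimum at $s$; since nonconstant global solutions of the autonomous equation are strictly monotone (tanh-type) and so achieve no extremum, the solutions through $r$ and $s$ are equilibria, giving $f_i(r)^2=f_i(s)^2=-mq_{ii}$, with the remaining details delegated to \cite[p 6]{Lim}. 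Note that this last identity by itself only yields $\max f_i=\pm\min f_i$, and excluding the heteroclinic tanh orbits through intermediate values requires precisely the kind of recurrence argument you supply. Your route---the flow of a left-invariant field is right translation $t\mapsto g\exp(tE_i)$, every one-parameter subgroup of $\SU(2)$ is a circle, hence every solution of the Riccati equation is periodic, and a periodic solution of a scalar autonomous first-order ODE is constant---treats all sign cases of $q_{ii}$ uniformly (finite-time blow-up when $q_{ii}>0$ is also incompatible with periodicity, i.e.\ with completeness of the flow), is self-contained rather than relying on the classification of solutions cited from \cite{Lim}, and your closing remark correctly extends the argument to an arbitrary compact connected group, where one-parameter subgroups need only be recurrent (their closures are tori), which still contradicts strict monotonicity. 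The final upgrade from $f_i\in\{\pm\sqrt{-mq_{ii}}\}$ pointwise to a global constant via continuity and connectedness then matches the paper's conclusion that $X$ is left-invariant.
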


\begin{proof} We suspend the summation convention in this proof and write all summations explicitly. Since $G$ is compact, it is unimodular and so admits a Milnor frame $\left \{ E_i \right \}$. Note that in such a frame $\langle[E_i,E_j],E_i\rangle=0$ for any $i,j$. Let $X = \sum_i f_i E_i$. We compute, using the Leibniz rule for the Lie derivative $\pounds_X (\gamma (Y,Z) ) = \pounds_X \gamma(Y,Z) + \gamma( \pounds_X Y, Z) + \gamma( Y, \pounds_X Z)$, at an arbitrary point in $G$.
\begin{align}
\label{eq2.20}
\begin{aligned}
\frac{1}{2} \pounds_X \gamma(E_i,E_i) =&\, \frac{1}{2} X ( \gamma(E_i,E_i) ) - \gamma( \pounds_X E_i, E_i ) = -\gamma ( [X,E_i], E_i ) \\
=&\,  -\gamma \left( \left[ \sum_j f_j E_j ,E_i\right], E_i \right) \\
=&\,  -\gamma \left(  \sum_j f_j \left[E_j ,E_i\right] - (E_i f_j) X_j, E_i \right) \\
=&\,  E_i f_i - \sum_j f_j \gamma \left(  \left[E_j ,E_i\right], E_i \right). \\
\end{aligned}
\end{align}
Then
\begin{align}
\label{eq2.21}
\begin{aligned}
\frac{1}{2} \pounds_X \gamma(E_i,E_i) - \frac{1}{m} \left \langle X,E_i\right \rangle^2 = E_i f_i - \frac{1}{m} f_i^2.
\end{aligned}
\end{align}

The remainder of the proof is exactly as given in \cite[Lemma 2.2]{Lim}. Since $G$ is compact, let $f_i$ achieve its maximum at $r\in G$ and its minimum at $s\in G$. Define $\mu_i:= q( E_i, E_i )$. Applying \eqref{eq2.21} and the hypothesis on $q$ we find that
\begin{align}
\label{eq2.22}
\mu_i = -\frac{1}{m} f_i^2(r) = -\frac{1}{m} f_i^2(s),
\end{align}
so that $f_i(r)^2 = f_i(s)^2 = - m \mu_i$. Let $c(t)$ be an integral curve of $E_i$. This integral curve exists for all $t$. In view of \eqref{eq2.21}, we see that along this curve $f_i(t)$ is a global (in ${\mathbb R}$) solution of
\begin{align}
\label{eq2.23}
f_i'(t) - \frac{1}{m} f_i^2 = \mu_i.
\end{align}
The nonconstant solutions of this equation are given by the hyperbolic tangent function with infinite domain of definition so it cannot achieve a maximum or minimum, which is a contradiction (see \cite[p 6]{Lim} for details). This leaves the constant solutions (and so the maximum and minimum of $f_i$ are equal). Since the $f_i$ are constant, $X=\sum_i f_iE_i$ is left-invariant.
\end{proof}

\subsection{Proof of Theorem 1.1}
Assume that $\kappa=0$ in \eqref{eq1.1}. As $\SU(2)$ is a compact $3$-dimensional Lie group, it is unimodular.  By \cite[Theorem 4.3]{Milnor}, there exists an orthonormal Milnor frame $\left \{ E_i \right \}$ of left-invariant vector fields that diagonalizes the Ricci tensor.

If $X$ vanishes, then $\ric = \Lambda \gamma$ and all three eigenvalues of the Ricci endomorphism are equal. Since $M = \SU(2)$, we conclude that $\Lambda>0$ and  $\gamma$ is a round sphere with $6$-dimensional isometry group, verifying the claim. So assume otherwise.

With $\kappa=0$, since $g$ is left-invariant and $\ric$ is natural, $\ric-\Lambda g$ is left-invariant. Then by Lemma \ref{lemma2.8}, so is $q$. Then
by Proposition \ref{proposition2.7}, $X$ is a Killing vector field. Since $\kappa=0$ as well, then \eqref{eq1.1} becomes
\begin{equation}
\label{eq2.24}
\ric=\Lambda \gamma +\frac{1}{m}X\otimes X .
\end{equation}
Since the Ricci tensor is diagonal in the Milnor frame, the off-diagonal components of \eqref{eq2.24} reduce to the conditions
\begin{equation}
\label{eq2.25}
X_iX_j=0,\quad i\neq j,
\end{equation}
and so at most one Milnor frame component of $X$ is nonzero, say $X=\alpha E_3$. But then, returning to \eqref{eq2.24}, the `1,1' and `2,2' components reduce to $\ric(E_1,E_1)=\Lambda=\ric(E_2,E_2)$, and so $\ric$ has a repeated eigenvalue $\Lambda$.
Since $\SU(2)$ is compact and $X$ is divergence-free, we know from \cite[Corollary 1.4]{BGKW2} that the Ricci endomorphism can have no more than two distinct eigenvalues. If the repeated eigenvalue is $\Lambda\neq 0$, by Lemma \ref{lemma2.2} the metric is a Berger sphere metric. If $\Lambda=0$, the metric may be a Berger sphere or it may belong to the family in Lemma \ref{lemma2.2} described by equation \eqref{eq2.1} with $\beta^2=1-\varepsilon^2$. However, $X$ is also Killing and is proportional to $E_3$, so the latter case is ruled out by Lemma \ref{lemma2.3}.

This completes the proof of Theorem 1.1.

\subsection{Discussion: Myers-Perry black holes}
By inserting $\beta=\varepsilon$ in the formula for the Ricci eigenvalues \eqref{eq2.2}, we have that Berger spheres obey $\ric= \diag \left ( \rho_1,\rho_2,\rho_3 \right ) =\frac{2}{\varepsilon^4} \diag \left (2\varepsilon^2-1, 2\varepsilon^2-1, 1\right )$ in the Milnor frame. Thus the Ricci tensor of a Berger sphere has signature $(+,+,+)$, $(0,0,+)$, or $(-,-,+)$. By equation \eqref{eq1.1} with $\pounds_X\gamma =0$ and writing that $X=k\sigma^3$ since $X$ can have only one nonzero component, we have
\begin{equation}
\label{eq2.26}
\begin{split}
\rho_1=&\,  \rho_2 = \frac{2\left ( 2\varepsilon^2-1\right )}{\varepsilon^4} = \Lambda ,\\
\rho_3=&\, \frac{2}{\varepsilon^4} = \Lambda + \frac{k^2}{m} ,
\end{split}
\end{equation}
Note that $\varepsilon\le 1$ implies that $\Lambda\le 2$.

By comparison to expressions \eqref{eqA.5} and \eqref{eqA.6} of the Appendix, we are able to express these quantities in terms of the physical parameters appearing in Myers-Perry black holes. In particular, this application requires that we set $m=2$ and then we see that the relations between $\varepsilon$, $k$, $\Lambda$, the Ricci eigenvalues $\rho_i$, and the black hole parameter $r_+$ introduced in the Appendix are
\begin{equation}
\label{eq2.27}
\begin{split}
\varepsilon=&\, r_+,\\
k^2=&\, 8\sqrt{1-\frac{\Lambda}{2}}\left ( 1 + \sqrt{1-\frac{\Lambda}{2}}\right )=\frac{8}{r_+^4}\left ( 1-r_+^2\right ), \\
\rho_1=&\, \rho_2=\Lambda = \frac{4}{r_+^2}-\frac{2}{r_+^4},\\
\rho_3=&\, \frac{2}{r_+^4} .
\end{split}
\end{equation}
The signature of the Ricci endomorphism is $(0,0,+)$ at $r_+=\frac{1}{\sqrt{2}}$, $(-,-,+)$ if $r_+< \frac{1}{\sqrt{2}}$, and $(+,+,+)$ if $r_+>\frac{1}{\sqrt{2}}$.

\section{Theorem 1.2} \label{section3}
\setcounter{equation}{0}

\subsection{Milnor frames and the divergence of \it{B}} A defining property of a Milnor frame for a unimodular 3-dimensional Lie group is that the Lie bracket of left-invariant basis vector fields takes the form $[E_1, E_2] =c_3 E_3$ and cyclic permutations; i.e., the structure constants $c^i_{jk}$ for the algebra obey
\begin{equation}
\label{eq3.1}
c_1:=c^1_{23}=-c^1_{32},\quad c_2:=c^2_{31}=-c^2_{13},\quad c_3:= c^3_{12}=-c^3_{21},\quad c^i_{jk}=0\text{ otherwise.}
\end{equation}
Using this fact, we now obtain an expression for the divergence of a left-invariant $(0,2)$-tensor (such as $B$) in a Milnor frame.

\begin{lemma}\label{lemma3.1}
On a unimodular Lie group with orthonormal Milnor frame $\{ E_i \}$, the divergence of a left-invariant symmetric $(0,2)$-tensor $B$ obeys
\begin{equation}
\label{eq3.2}
\divergence B=\sum\limits_{j,k}B^{jk} \nabla_{E_j}E_k .
\end{equation}
Indeed,
\begin{equation}
\label{eq3.3}
\begin{split}
\divergence B = &\, \left ( c_3-c_2\right ) B^{23} E_1 +\left ( c_1-c_3\right ) B^{31} E_2 +\left ( c_2-c_1\right ) B^{12} E_3,\\
\implies \left \langle X, \divergence B \right \rangle = &\, \left ( c_3-c_2\right ) B^{23} X_1 +\left ( c_1-c_3\right ) B^{31} X_2 +\left ( c_2-c_1\right ) B^{12} X_3,
\end{split}
\end{equation}
for any vector field $X$.
In particular, if $B$ is diagonal in the Milnor frame then $\divergence B=0$.
\end{lemma}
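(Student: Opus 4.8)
The plan is to work throughout in the orthonormal Milnor frame $\{E_i\}$ and compute the divergence component by component from its frame expression $(\divergence B)^k = \nabla_j B^{jk} = E_j(B^{jk}) + \sum_l \Gamma^j_{jl}B^{lk} + \sum_l \Gamma^k_{jl}B^{jl}$ (sum on $j$), where the connection coefficients $\Gamma^c_{ab}$ are defined by $\nabla_{E_a}E_b = \sum_c \Gamma^c_{ab}E_c$. The strategy is to kill the first two terms using left invariance and unimodularity, recognize the third as \eqref{eq3.2}, and then make it explicit via the Koszul formula to obtain \eqref{eq3.3}.

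First I would establish \eqref{eq3.2}. Because $B$ and the frame are left-invariant, the components $B^{jk}=B(E_j,E_k)$ are constant on $G$, so $E_j(B^{jk})=0$. The term $\sum_j \Gamma^j_{jl}$ equals $\divergence E_l$, and a left-invariant vector field on a unimodular group is divergence-free: metric compatibility gives $\langle \nabla_{E_j}E_l,E_j\rangle = -\langle E_l,\nabla_{E_j}E_j\rangle$, and summing over $j$ with the identity $\sum_j \nabla_{E_j}E_j=0$ from the proof of Lemma \ref{lemma2.4} (equation \eqref{eq2.11}) yields $\divergence E_l=0$. Only the final term survives, giving $(\divergence B)^k = \sum_{j,l}\Gamma^k_{jl}B^{jl} = \sum_{j,l}\langle \nabla_{E_j}E_l,E_k\rangle B^{jl}$, which is exactly the $k$-th component of $\sum_{j,l}B^{jl}\nabla_{E_j}E_l$. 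This is \eqref{eq3.2}.

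Next I would make this explicit to reach \eqref{eq3.3}. On left-invariant fields the three derivative-of-inner-product terms in the Koszul formula drop out (these inner products are constant), leaving $2\langle \nabla_{E_i}E_j,E_k\rangle = \langle [E_i,E_j],E_k\rangle - \langle [E_i,E_k],E_j\rangle - \langle [E_j,E_k],E_i\rangle$. Feeding in the Milnor brackets \eqref{eq3.1} shows that each $\nabla_{E_i}E_i=0$ (so the diagonal terms $B^{jj}\nabla_{E_j}E_j$ drop from \eqref{eq3.2}) and that each off-diagonal $\nabla_{E_j}E_k$ with $j\neq k$ is a multiple of the third basis vector, with a coefficient of the form $\tfrac12(\pm c_1 \pm c_2 \pm c_3)$. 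Summing each symmetric pair and using $B^{jk}=B^{kj}$, e.g. $B^{12}(\nabla_{E_1}E_2+\nabla_{E_2}E_1)$, the antisymmetric piece ($\pm c_3$ here) cancels, leaving $(c_2-c_1)B^{12}E_3$, and cyclically $(c_3-c_2)B^{23}E_1$ and $(c_1-c_3)B^{31}E_2$. This is the first line of \eqref{eq3.3}; pairing with an arbitrary $X$ gives the second, and the last assertion is immediate since only the strictly off-diagonal components $B^{23},B^{31},B^{12}$ appear.

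I expect no genuine obstacle, only bookkeeping. The two points requiring care are the sign manipulation, via metric compatibility, needed to recognize the surviving term as $+\sum_{j,l}B^{jl}\nabla_{E_j}E_l$ rather than as $-\sum_j B(E_j,\nabla_{E_j}E_k)E_k$, and the vanishing of the $\sum_j\Gamma^j_{jl}$ term, which is precisely where unimodularity enters; for the latter I would simply invoke equation \eqref{eq2.11} rather than re-derive $\sum_j\nabla_{E_j}E_j=0$.
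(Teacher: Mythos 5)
Your proof is correct and follows essentially the same route as the paper's: reduce $\divergence B$ to $\sum_{j,k}B^{jk}\nabla_{E_j}E_k$ by killing the derivative and trace terms via left invariance and unimodularity (your appeal to $\sum_j\nabla_{E_j}E_j=0$ from \eqref{eq2.11} replaces the paper's termwise use of the Milnor-frame identity $\left\langle [E_j,E_k],E_j\right\rangle=0$), then evaluate the connection in the Milnor frame to assemble \eqref{eq3.3}. The only cosmetic difference is that you compute each individual $\nabla_{E_j}E_k$ from the Koszul formula, whereas the paper never writes full connection coefficients and instead extracts only the needed symmetric combinations $\left\langle E_i,\nabla_{E_j}E_k+\nabla_{E_k}E_j\right\rangle$ directly from metric compatibility and torsion-freeness; both correctly yield $\nabla_{E_i}E_i=0$ and the coefficients $(c_3-c_2)B^{23}$, $(c_1-c_3)B^{31}$, $(c_2-c_1)B^{12}$.
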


\begin{proof}
\begin{equation}
\label{eq3.4}
\begin{split}
\left \langle E_i,\divergence B\right \rangle =&\, \left \langle E_i,\sum\limits_j \left \langle \nabla_{E_j} B ,E_j\right \rangle \right \rangle \\
=&\, \left \langle E_i,\sum\limits_{j,k,q} \left \langle \nabla_{E_j} \left ( B^{kq}E_k\otimes E_q\right ) ,E_j\right \rangle \right \rangle \\
=&\,\sum\limits_{j,k,q} B^{kq} \left [ \left \langle E_i ,E_q\right \rangle \left \langle \nabla_{E_j}E_k,E_j\right \rangle + \left \langle E_k ,E_j\right \rangle \left \langle E_i, \nabla_{E_j} E_q\right \rangle\right ] \\
\end{split}
\end{equation}
Now $\left \langle \nabla_{E_j}E_k,E_j\right \rangle= \frac12 E_k \left ( \vert E_i\vert^2 \right )+\left \langle [E_j,E_k],E_j\right \rangle$. The first term vanishes because Milnor frames are orthonormal. The second term vanishes because, for a Milnor frame, $[E_j,E_k]$ is orthogonal to $E_j$. In the second term on the right in \eqref{eq3.4}, we also use the orthonormality to write that $\left \langle E_k ,E_j\right \rangle=\delta_{jk}$. Thus, \eqref{eq3.4} reduces to \eqref{eq3.2}.

Now we compute the elements of $\divergence B$ explicitly. Consider the $E_1$-component of $\nabla_{E_i}E_i$ (not summed on $i$). Obviously,
\begin{equation}
\label{eq3.5}
\left \langle E_1, \nabla_{E_1}E_1\right \rangle = \frac12 E_1\left ( \left \langle E_1, E_1\right \rangle\ \right ) =E_1(1)=0,
\end{equation}
since the Milnor frame is orthonormal. Next,
\begin{equation}
\label{eq3.6}
\begin{split}
\left \langle E_1, \nabla_{E_2}E_2\right \rangle = &\, E_2\left ( \left \langle E_1, E_2\right \rangle \right )-  \left \langle \nabla_{E_2}E_1, E_2\right \rangle \\
=&\, 0 - \left \langle \nabla_{E_1}E_2-c_3E_3, E_2\right \rangle\text{ since } [E_2,E_1]=-c_3E_3\\
=&\, 0-\left \langle \nabla_{E_1}E_2, E_2\right \rangle +c_3 \left \langle E_3, E_2\right \rangle\\
=&\, 0-\frac12 E_1 \left ( \left \langle E_2, E_2\right \rangle\ \right ) +0,\\
=&\, 0.
\end{split}
\end{equation}
Hence the $E_1$-component of $\nabla_{E_i}E_i$ (not summed on $i$) vanishes, for any $i$. But the same argument applies to all the other components, so the diagonal components of $\nabla_{E_i}E_j$ are zero (i.e., the Milnor frame is a \emph{geodesic basis} \cite{CLNN}). Then by \eqref{eq3.2} the diagonal of $B$ contributes zero to $\divergence B$.

Next we compute
\begin{equation}
\label{eq3.7}
\begin{split}
\left \langle E_1, \nabla_{E_2}E_3 + \nabla_{E_3}E_2\right \rangle = &\, -\left \langle \nabla_{E_2} E_1, E_3 \right \rangle-\left \langle \nabla_{E_3} E_1, E_2 \right \rangle\\
= &\, -\left \langle \nabla_{E_1} E_2, E_3 \right \rangle +c_3 -\left \langle \nabla_{E_1} E_3, E_2 \right \rangle -c_2,
\end{split}
\end{equation}
using properties of the Milnor frame. But by the Leibniz rule and the orthogonality of the basis, then
\begin{equation}
\label{eq3.8}
\left \langle \nabla_{E_1} E_2, E_3 \right \rangle = -\left \langle \nabla_{E_1} E_3, E_2 \right \rangle,
\end{equation}
so \eqref{eq3.7} simplifies to
\begin{equation}
\label{eq3.9}
\left \langle E_1, \nabla_{E_2}E_3 + \nabla_{E_3}E_2\right \rangle =c_3-c_2.
\end{equation}
It follows that, if $B$ is symmetric, then
\begin{equation}
\label{eq3.10}
B^{23}\nabla_{E_2}E_3+B^{32}\nabla_{E_3}E_2 = B^{23}\left ( \nabla_{E_2}E_3 + \nabla_{E_3}E_2 \right ) = \left ( c_3 -c_2\right ) B^{23},
\end{equation}
and likewise for the other off-diagonal components. Putting everything together, we recover \eqref{eq3.3}.
\end{proof}

There is a partial converse for Lemma \ref{lemma3.1}.

\begin{lemma}\label{lemma3.2}
Say that $(G,\gamma,X)$ is a solution of \eqref{eq1.1} where $g$ is a left-invariant metric on a unimodular $3$-dimensional Lie group $G$ and $X$ is Killing. Then $\kappa \left \langle X, \divergence B\right \rangle =0$.
\end{lemma}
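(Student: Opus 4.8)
The plan is to bypass Corollary \ref{corollary2.6} (which presumes that $X$ is left-invariant) and instead argue directly from the divergence of the field equation, using only that $X$ is Killing and $\gamma$ is left-invariant. Since $X$ is Killing we have $\pounds_X\gamma=0$, so \eqref{eq1.1} collapses to $\ric-\frac1m X\otimes X=\Lambda\gamma+\kappa B$. The first step is to apply $\divergence_\gamma$ to both sides of this reduced identity.

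The point of taking the divergence is that every term except the two we care about is forced to vanish. By the contracted second Bianchi identity $\divergence\ric=\frac12\nabla R$, and $R$ is constant because $\gamma$ is left-invariant, so $\divergence\ric=0$; metric compatibility gives $\divergence\gamma=0$. Hence the divergence of the reduced equation reads $-\frac1m\divergence(X\otimes X)=\kappa\,\divergence B$. Pairing with $X$ then yields $\kappa\left\langle X,\divergence B\right\rangle=-\frac1m\left\langle X,\divergence(X\otimes X)\right\rangle$, so it remains only to show that the right-hand side vanishes.

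For the last step I would expand $\divergence(X\otimes X)=(\divergence X)\,X+\nabla_X X$ by the Leibniz rule. Both contributions die on contraction with $X$ precisely because $X$ is Killing: tracing the Killing equation $\nabla_iX_j+\nabla_jX_i=0$ gives $\divergence X=0$, while $\left\langle X,\nabla_X X\right\rangle=X^iX^j\nabla_iX_j=0$ because the symmetric tensor $X^iX^j$ is contracted against the antisymmetric $\nabla_iX_j$. Therefore $\left\langle X,\divergence(X\otimes X)\right\rangle=0$ and we conclude $\kappa\left\langle X,\divergence B\right\rangle=0$, as claimed.

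There is no serious obstacle here; the only points requiring care are recognizing that left-invariance of $\gamma$ (not of $X$) is what makes $R$ constant and hence $\divergence\ric=0$, and that the two Killing identities $\divergence X=0$ and the antisymmetry of $\nabla X$ are exactly what kill both pieces of $\divergence(X\otimes X)$. As an alternative route consistent with the standing assumptions of Section \ref{section3}, if $X$ is additionally left-invariant one may instead invoke Corollary \ref{corollary2.6}: with $\pounds_X\gamma=0$ its left-hand side reduces to $-\frac1m\tr_\gamma\left\{(X\otimes X)\circ\ad_X\right\}=-\frac1m\sum_i X_i\left\langle X,[X,E_i]\right\rangle=-\frac1m\left\langle X,[X,X]\right\rangle=0$, giving the same conclusion.
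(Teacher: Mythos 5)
Your proof is correct, but it takes a genuinely different route from the paper's. The paper works componentwise: it invokes Milnor's theorem that the Milnor frame diagonalizes $\ric$, reads off from the off-diagonal components of the reduced equation that $\kappa B_{ij}=-\frac{1}{m}X_iX_j$ for $i\neq j$, and substitutes this into the explicit formula \eqref{eq3.3} of Lemma \ref{lemma3.1}, where the conclusion follows from the cyclic cancellation $\left ( c_3-c_2\right ) +\left ( c_1-c_3\right ) +\left ( c_2-c_1\right ) =0$ multiplying $-\frac{1}{m}X_1X_2X_3$. Your argument is frame-free: you take $\divergence_\gamma$ of $\ric-\frac{1}{m}X\otimes X=\Lambda\gamma+\kappa B$, kill $\divergence\ric=\frac12\nabla R$ by constancy of $R$ (which indeed needs only left-invariance of $\gamma$), and kill $\left\langle X,\divergence(X\otimes X)\right\rangle=(\divergence X)\,|X|^2+\left\langle X,\nabla_XX\right\rangle$ using the two standard Killing identities; each step checks out. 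What your route buys is strictly greater generality: you never use unimodularity, dimension $3$, the Milnor frame, Lemma \ref{lemma3.1}, or left-invariance of $B$ and $X$ --- only that $\gamma$ has constant scalar curvature and $X$ is Killing --- whereas the paper's proof is tied to the $3$-dimensional unimodular setting but has the virtue of exhibiting exactly which algebraic relations ($\kappa B_{ij}=-\frac1m X_iX_j$ off the diagonal) the field equation imposes in the Milnor frame, relations it reuses in the proof of Theorem \ref{theorem1.2}. Your closing alternative via Corollary \ref{corollary2.6} is also valid (with $\pounds_X\gamma=0$ the left side reduces to $-\frac1m\tr_\gamma\left\{(X\otimes X)\circ\ad_X\right\}$, and constancy of the components of a left-invariant $X$ in the frame gives $\sum_i X_i[X,E_i]=[X,X]=0$), though note it does require the standing left-invariance hypotheses on $X$ and $B$ that your main argument avoids.
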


\begin{proof}
Since $X$ is Killing, equation \eqref{eq1.1} becomes $\ric-\frac{1}{m}X\otimes X = \Lambda \gamma +\kappa B$. The Milnor frame for the Lie algebra of $G$ diagonalizes $\ric$ \cite[Theorem 4.3]{Milnor}, so we have that $-\frac{1}{m}X_iX_j=\kappa B_{ij}$ for $i\neq j$. If $\kappa= 0$, we are done, so assume otherwise. Then we may substitute $\kappa B^{12}=-\frac{1}{m}X^1X^2$, etc, into the right-hand side of \eqref{eq3.3}, which yields the result.
\end{proof}

\subsection{Proof of Theorem \ref{theorem1.2}}
Choose an orthonormal Milnor frame. In this basis, the Ricci tensor is $\ric=\diag \left ( \rho_1,\rho_2,\rho_3 \right )$. Define $B$ by requiring that in this frame it takes the form $B= \diag \left ( b_1,b_2,b_3 \right )$. Then by equation \eqref{eq3.3} we have that $\left \langle X, \divergence B \right \rangle=0$ for any choice of left-invariant $X$, so by Proposition \ref{proposition2.7} we have that $X$ is a Killing vector field. Furthermore, equation \eqref{eq2.25} will then apply, so $X$ will have at most one nonzero component in this basis. Without loss of generality, let $X=\diag (0,0,k)$, $k\neq 0$. Then equation \eqref{eq1.1} reduces to the matrix equation
\begin{equation}
\label{eq3.11}
\left [ \begin{array}{ccc} \rho_1 & 0 & 0 \\ 0 & \rho_2 & 0 \\ 0 & 0 & \rho_3 \end{array} \right ] -\frac{1}{m}\left [ \begin{array}{ccc} 0 & 0 & 0 \\ 0 & 0 & 0 \\
0 & 0 & k^2 \end{array} \right ] = \left [ \begin{array}{ccc}\Lambda & 0 & 0 \\ 0 & \Lambda & 0 \\
0 & 0 & \Lambda \end{array} \right ] + \kappa \left [ \begin{array}{ccc} b_1 & 0 & 0 \\ 0 & b_2 & 0 \\
0 & 0 & b_3 \end{array} \right ].
\end{equation}
Solutions with $\kappa\neq 0$ are given by
\begin{equation}
\label{eq3.12}
b_1=\frac{\rho_1-\Lambda}{\kappa},\quad b_2=\frac{\rho_2-\Lambda}{\kappa},\quad b_3=\frac{\rho_3-\Lambda-\frac{k^2}{m}}{\kappa}.
\end{equation}
If $b_1=b_2$, then $\rho_1=\rho_2$ and since the Ricci endomorphism has a repeated eigenvalue the conclusions of Theorem \ref{theorem1.1} apply.

This completes the proof of Theorem \ref{theorem1.2}.

For each choice of the $\rho_i$, there is a choice of the $b_i$ solving \eqref{eq1.1}. Now by equation \eqref{eq2.2}, if either $\beta=1$ or $\varepsilon=1$ or $\varepsilon=\beta$, then there are at most two distinct $\rho_i$, so if all the $\rho_i$ are distinct then the metric cannot be a Berger sphere. By \cite[Theorem 3.8]{HL} (or \cite[Theorem 2.10]{Shin}) the maximal isometry group is then $3$-dimensional.

There are solutions in which the Ricci endomorphism has a repeated eigenvalue, say $\rho_1=\rho_2$, and the metric is a Berger sphere, but the $b_i$ are all distinct. Even when all three eigenvalues $\rho_i$ are equal and the sphere is round, only two of the $b_i$ will be equal.

Finally, as $\kappa\to 0$, $B$ will remain well-defined if the numerators also approach zero at the same rate (or faster). But then $\rho_1\to\rho_2$ since they both must approach $\Lambda$, so $\ric$ acquires a repeated eigenvalue and the curvatures become those of a Berger sphere in this limit.

\appendix
\section{Myers-Perry-(A)dS spacetimes}
\setcounter{equation}{0}

\noindent Consider the family of stationary, 5-dimensional Myers-Perry-(A)dS$_5$ spacetimes $(\mathbf{M}, \mathbf{g})$ with equal angular momenta in two orthogonal planes \cite{GLPP}. The solutions are parameterized by their mass and angular momentum. These spacetimes satisfy the Einstein equations $\ric(\mathbf{g}) = \Lambda \mathbf{g}$, with  a group of isometries isomorphic to $\mathbb{R} \times \SU(2) \times \U(1)$. In Gaussian null coordinates, we have
\begin{equation}
\label{eqA.1}
\begin{aligned}
\mathbf{g}  =& \left[-\frac{(\rho + r_+)^2}{h(\rho)^2 g(\rho)^2} + \rho h(\rho)^2 j(\rho)^2 \right] \rho \td v^2 + 2 \td v \td r  +  2\rho h(\rho)^2  j(\rho)\td v \sigma_3  \\ &+ h(\rho)^2 \sigma_3^2 + (\rho + r_+)^2  (\sigma_1^2 + \sigma_2^2)
\end{aligned}
\end{equation}
where $r_+ >0$ is a parameter that characterizes the horizon scale. Explicit expressions for $g(\rho), h(\rho), j(\rho)$ can be read off from \cite[Section 2]{KLR} after a translation of the radial coordinate and the replacement $\ell^2 \to -4/\Lambda$ to allow for $\Lambda \geq 0$ in addition to $\Lambda <0$. The metric functions depend on the mass and angular momentum parameters $(\mu,a)$ which are related to $r_+$ via the constraint
\begin{equation}
\label{eqA.2}
\mu = \frac{r_+^4(4 - \Lambda r_+^2)}{2(4r_+^2 -a^2(4 - \Lambda r_+^2))} > 0
\end{equation}
where this inequality is required to avoid a naked singularity. The function $\rho = \rho(r)$ is defined implicitly by
\begin{equation}
\label{eqA.3}
r = \int_0^\rho \frac{\rho' + r_+}{h(\rho')} \; \td \rho'.
\end{equation}
The spacetimes are smooth on and outside the event horizon (which is also a Killing horizon) located along the null hypersurface $r = 0$ with null generator $\partial_v$. They are asymptotically flat, asymptotically anti-de Sitter, or asymptotically de Sitter for $\Lambda$ vanishing, negative, or positive, respectively. After fixing one parameter to match our scaling in \eqref{eq2.1},  the geometry near $\rho =0$ yields the following one-parameter family of solutions of \eqref{eq1.1} (with $m=2$):
\begin{equation}
\label{eqA.4}
\begin{split}
\gamma  =&\,  r_+^2\left[\left ( \sigma^1\right )^2 +  \left ( \sigma^2\right )^2\right] + \left ( \sigma^3\right )^2 ,\\
\kappa =&\, \frac{(4 - \Lambda r_+^2)r_+^2 -2}{2r_+^2},\qquad B = 2\left[\left ( \sigma^1\right )^2 +  \left ( \sigma^2\right )^2\right] + \frac{2}{r_+^2}(2r_+^2-1)(\sigma^3)^2,  \\
X = &\, \frac{2}{r_+} \left[(4 - \Lambda r_+^2)(1-r_+^2)\right]^{1/2}\sigma^3,
\end{split}
\end{equation}
with $r_* \leq r_+  \leq 1$ where $r_*$ is the smallest positive root of $\kappa$. In particular, $r_* = \frac{1}{\sqrt{2}}$ when $\Lambda =0$. Here $\Lambda \leq 2$, which ensures in the $\Lambda>0$ case that $\kappa\ge 0$, which in turn yields $4 - \Lambda r_+^2\ge \frac{2}{r_+^2}>0$ so $X$ is well defined in this case.

The extreme case is obtained by setting $\kappa=0$ to obtain
\begin{equation}
\label{eqA.5}\Lambda=\frac{4}{r_+^2}-\frac{2}{r_+^4}.
\end{equation}
Using this, we obtain the near horizon geometry
\begin{equation}
\label{eqA.6}
\begin{split}
\gamma=&\,r_+^2\left[\left ( \sigma^1\right )^2 +  \left ( \sigma^2\right )^2\right] + \left ( \sigma^3\right )^2 ,\\
X=&\, \pm \frac{2\sqrt{2\left ( 1-r_+^2\right )}}{r_+^2}\sigma^3.
\end{split}
\end{equation}

\end{document}